\renewcommand{\emph}[1]{{\textit{ #1}}}
\newtheorem{theorem}{Theorem}
\newtheorem{lemma}[theorem]{Lemma}
\newtheorem{clm}{Claim}
\newtheorem{corollary}[theorem]{Corollary}
\newtheorem{proposition}[theorem]{Proposition}
\theoremstyle{definition}
\newtheorem{problem}{Problem}
\def\pn3sat{\textsc{Pure-Nae-3-Sat}\xspace}
\def\NN{\mathbb{N}}
\def\L{{\begin{tikzpicture}[scale=0.2]\draw (0,0) -- (0,-1) -- (1,-1);
\end{tikzpicture}}}
\def\RL{{\begin{tikzpicture}[scale=0.2]\draw (0,-1) -- (1,-1) -- (1,0);
\end{tikzpicture}}}
\def\G{{\begin{tikzpicture}[scale=0.2]\draw (0,0) -- (0,1) -- (1,1);
\end{tikzpicture}}}
\def\RG{{\begin{tikzpicture}[scale=0.2]\draw (0,0) -- (1,0) -- (1,-1);
\end{tikzpicture}}}
\def\U{{\begin{tikzpicture}[scale=0.2]\draw (0,0) -- (0,-1) -- (1,-1) -- (1,0);\end{tikzpicture}}}
\def\RU{{\begin{tikzpicture}[scale=0.2]\draw  (0,0) -- (0,1)--  (1,1) -- (1,0);
\end{tikzpicture}}}
\def\C{{\begin{tikzpicture}[scale=0.2]\draw (1,1) -- (0,1) -- (0,0) -- (1,0);\end{tikzpicture}}}
\def\RC{{\begin{tikzpicture}[scale=0.2]\draw (0,0) -- (1,0) --(1,1) -- (0,1);\end{tikzpicture}}}
\def\Z{{\begin{tikzpicture}[scale=0.2]\draw (0,0) -- (1,0) -- (1,-0.5) -- (2,-0.5);\end{tikzpicture}}}
\def\RZ{{\begin{tikzpicture}[scale=0.2]\draw (0,-0.5) -- (1,-0.5) -- (1,0) -- (2,0);\end{tikzpicture}}}
\def\N{{\begin{tikzpicture}[scale=0.2]\draw (0,-1) -- (0,-0.5) -- (1,-0.5) -- (1,0);\end{tikzpicture}}}
\def\RN{{\begin{tikzpicture}[scale=0.2]\draw (0,0) -- (0,-0.5) -- (1,-0.5) -- (1,-1);\end{tikzpicture}}}
\def\O{{\begin{tikzpicture}[scale=0.2]\draw (0,-0.2) -- (0,-1) -- (1,-1) -- (1,0) -- (0.2,0);\end{tikzpicture}}}
\begin{document}

\title{On edge intersection graphs of paths with 2 bends\footnote{The extended abstract of this paper was presented on the conference WG 2016 \cite{WG}.}}

\author{Martin Pergel\footnote{Department of Software and Computer Science Education,
Charles University, Praha, Czech Republic. Partially supported by a Czech research grant GA\v{C}R GA14-10799S.}
\and
Pawe{\l} Rz\k{a}\.{z}ewski\footnote{Faculty of Mathematics and Information Science, Warsaw University of Technology, Warsaw, Poland} \footnote{Institute of Computer Science and Control, Hungarian Academy of Sciences (MTA SZTAKI), Budapest, Hungary. Supported by ERC Starting Grant PARAMTIGHT (No. 280152).}
}
\date{}
\maketitle

\begin{abstract}
An EPG-representation of a graph $G$ is a collection of paths in a plane square grid, each  corresponding to a single vertex of $G$, so that two vertices are adjacent if and only if their corresponding paths share infinitely many points.
In this paper we focus on graphs admitting EPG-representations by paths with at most 2 bends. We show hardness of the recognition problem for this class of graphs, along with some subclasses.

We also initiate the study of graphs representable by unaligned polylines, and by polylines, whose every segment is parallel to one of prescribed slopes. We show hardness of recognition and explore the trade-off between the number of bends and the number of slopes.
\end{abstract}

\section{Introduction}
The concept of {\em edge intersection graphs of paths in a grid} ({\em EPG-graphs}) was introduced by Golumbic {\em et al.} \cite{Izraelci}.
By an EPG-representation of a graph $G$ we mean a mapping from vertices of $G$ to paths in a planar square grid, such that two vertices are adjacent if and only if their corresponding paths share a grid edge.
As each graph can be represented in this way (see Golumbic {\em et al.} \cite{Izraelci}),
it makes sense to consider representations with some restricted sets of shapes. A usual parameterization is by bounding the number $k$ of times each path is allowed to change the direction. Graphs with such a representation are called {\em $k$-bend graphs}.
There are two main branches in this kind of research. The first one is understanding the structure of graphs with at most $k$ bends  -- so far, the case of 1-bend graphs received most attention \cite{Izraelci,Steve, EGM, AR}. The other is finding the smallest $k$, such that every graph of a given class $\cal G$ is a $k$-bend graph. The most interesting results seem to concern planar graphs \cite{BS,Nemci2}.

Since 0-bend graphs are just {\em interval graphs}, they can be recognized in  polynomial time (see e.g. Booth and Lueker \cite{BL}). The recognition of 1-bend graphs is NP-complete (see Heldt {\em et al.} \cite{Nemci}), even if the representation is restricted to any prescribed set of 1-bend objects (see Cameron {\em et al.} \cite{Steve}).
However, the problem becomes trivially solvable when $k$ is at least the maximum degree of the input graph \cite{Nemci}. Thus it is unclear whether $k$-bend graphs are hard to recognize for all $k \geq 2$.

It is worth mentioning the closely related notion of {\em $B_k$-VPG-graphs}. These graphs are defined as intersection graphs of axis-aligned paths with at most $k$ bends. So, unlike in the EPG-representation, paths that share a finite number of points define adjacent vertices. Chaplick {\em et al.} \cite{CJKV} showed it is NP-complete to recognize $B_k$-VPG-graphs, for all $k \geq 0$.

In this paper we explore the problem of recognition of subclasses of EPG-graphs. Namely, we show that it is NP-complete to recognize 2-bend graphs.
We also consider some restrictions, where we permit just some
types of the curves in an EPG-representation (similarly to \cite{Steve}).
One of these restrictions, i.e., {\em monotonic EPG-representations}, where each path ascends in rows and columns, was already considered by Golumbic {\em et al.} \cite{Izraelci}. Our hardness proof even shows that
between monotonic 2-bend graphs and 2-bend graphs, no polynomially
recognizable class can be found.

The class of 2-bend graphs can be perceived as a generalization of the quite well-studied class of 1-bend graphs. We also consider some generalizations of the concept of EPG-representations. We do not require individual segments to be axis-aligned, but
we permit them to use any slope. We call such graphs {\em unaligned EPG-graphs} and study the number of bends in this setting.
After this generalization, we may ask about particular restrictions. These restrictions are represented by restricting number of slopes that segments may use
or even by using just prescribed shapes (in a flavor similar to \cite{Steve}). 

For unaligned EPG-graphs, we show that it is NP-hard to determine
whether a graph is an unaligned 2-bend graph (hardness of the recognition
for 1-bend graphs follows from \cite{Steve}).

Having introduced unaligned EPG-graphs, we observe that there is some trade-off between the number of bends and the number of slopes used in a representation.
We also show that given an unaligned 2-bend graph on $n$ vertices,
we may need $\Omega(\sqrt{n})$ slopes to represent it. This result
appears to be a corollary of our hardness-reduction.

The paper is organized as follows. We start with some definitions and preliminary observations on the structure of 2-bend graphs. In Section \ref{sec:aligned} we prove that it is NP-complete to recognize the graphs from this class and also from some of its subclasses.
In Section \ref{sec:unaligned} we introduce unaligned EPG-graphs and show NP-hardness of recognition of unaligned 2-bend graphs. Then we show a lower bound on the number of slopes required for the representation of any unaligned 2-bend graph on $n$ vertices and we discuss the relations between the number of bends and the number of slopes. The paper is concluded with some open questions in Section \ref{sec:conclusion}.

\section{Preliminaries} \label{sec:defs}
For an EPG-representation of a graph $G$, by $P_v$ we shall denote the path representing a vertex $v$. Often we shall identify the vertex $v$ with the path $P_v$. For example, if we say that two paths are {\em adjacent}, we mean that they share infinitely many points. Note that if two paths {\em intersect}, one common point is enough.

A central notion in the study of EPG-graphs is the {\em bend number}. The bend number of a graph $G$, denoted by $b(G)$, is the minimum $k$, such that $G$ has an EPG-representation, in which every path changes it direction at most $k$ times.
Without loss of generality we can assume that every path in a $k$-bend EPG-representation bends exactly $k$ times \cite{Steve}.

Each 2-bend path will be classified as {\em vertical} or {\em horizontal}, if its middle segment is respectively vertical or horizontal. This middle segment will be called the {\em body} of the path, while the remaining two segments will be referenced as its {\em legs}. 

For a set $X$ of shapes of polylines (i.e., piecewise-linear curves), by $X$-graphs we shall denote the class of graphs admitting an EPG-representation, in which the shape of every path is in $X$ (similar notation was used in \cite{Steve}). So for example 1-bend graphs are $\{\L,\RL,\G,\RG\}$-graphs, while monotonic 2-bend graphs are exactly $\{\RZ,\N\}$-graphs.

Golumbic {\em et al.} \cite{Izraelci} analyzed the structure of cliques in 1-bend graphs and proved that in 1-bend graphs each clique $C$ is either an {\em edge-clique} or a {\em claw-clique}. 
A maximal edge-clique consists of vertices whose representing paths share a common grid edge. A {\em claw} is a set of three distinct grid edges sharing a single endpoint and a maximal claw-clique consists of all paths containing two out of three edges of a given claw. Since we can safely assume that each 1-bend representation of a graph with $n$ vertices can be embedded in a $2n \times 2n$ grid (see also \cite{Izraelci,AS}), we obtain that the number of maximal cliques in a 1-bend graph is at most $O(n^2)$, i.e., is polynomial in $n$. This is no longer the case with 2-bend graphs.

Let $n$ be a positive integer and let $K_{2n}^-$ be the {\em cocktail-party graph}, i.e., a complete graph on $2n$ vertices with a perfect matching removed. It is clear that $K_{2n}^-$ has $2^n = 2^{|V(K_{2n}^-)|/2}$ maximal cliques. Fig. \ref{fig-cliques} (left) shows that $K_{2n}^-$ is a 2-bend graph. Thus we obtain the following.

\begin{proposition}
2-bend graphs can have exponentially many maximal cliques.
\end{proposition}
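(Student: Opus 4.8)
The plan is to exhibit, for every positive integer $n$, a 2-bend EPG-representation of the cocktail-party graph $K_{2n}^-$, and then invoke the already-stated fact that $K_{2n}^-$ has $2^n$ maximal cliques. Since the proposition only asserts the existence of 2-bend graphs with exponentially many maximal cliques, a single explicit construction suffices, and the excerpt explicitly points to Fig. \ref{fig-cliques} (left) as the intended witness. So the real content is to describe the representation and verify it captures exactly the adjacencies of $K_{2n}^-$.

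First I would set up coordinates. Label the $2n$ vertices as pairs $(u_i, w_i)$ for $i = 1, \dots, n$, where $\{u_i, w_i\}$ is the $i$-th removed matching edge, so that in $K_{2n}^-$ every pair of vertices is adjacent \emph{except} each matched pair $u_i, w_i$. The natural idea is to reserve one horizontal grid edge, say the segment $e$ joining $(0,0)$ and $(1,0)$, as a common ``spine'' that every path must cross, and then arrange that all $2n$ paths share $e$ \emph{except} that within each pair $\{u_i, w_i\}$ the two paths are routed so as to avoid sharing any grid edge with each other while still sharing $e$ with everyone outside the pair. Concretely, I would place the $n$ pairs at $n$ distinct horizontal levels and use the two bends of each path to send its two legs off to opposite sides (one leg up-and-over, one leg down-and-under), so that two paths from different pairs overlap along a long common horizontal stretch containing $e$, whereas the two paths of a single pair are made to touch only at isolated grid points (a finite intersection, hence non-adjacent in the EPG sense).

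The key verification step is to confirm the two conditions of an EPG-representation against the edge set of $K_{2n}^-$: (i) paths $P_x$ and $P_y$ with $x,y$ in different matched pairs share infinitely many points (a common grid edge), and (ii) the two paths $P_{u_i}, P_{w_i}$ share only finitely many points. Condition (i) is handled by the common spine $e$ together with a shared overlapping body; condition (ii) is the delicate one, since $u_i$ and $w_i$ must simultaneously be adjacent to \emph{everyone else} yet non-adjacent to each other, and this is exactly where the two available bends are needed to peel the pair apart at their own level while keeping both anchored to the global overlap. I would check that using 2 bends is both sufficient and necessary for this peeling, and that no unintended grid edge is shared by a matched pair.

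The main obstacle I expect is condition (ii): engineering the geometry so that $P_{u_i}$ and $P_{w_i}$ cross only transversally (sharing at most a point, never a grid edge) while each of them still runs along the common edge $e$ shared with all $2(n-1)$ other paths. Once the picture is pinned down — essentially reading it off Fig. \ref{fig-cliques} (left) — the adjacency check is routine casework on whether two indices coincide. Combining the representation with the stated count $2^{|V(K_{2n}^-)|/2}=2^n$ of maximal cliques then yields the proposition immediately.
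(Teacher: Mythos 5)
Your overall strategy is the same as the paper's: take the cocktail-party graph $K_{2n}^-$, use the fact that it has $2^n$ maximal cliques, and exhibit a 2-bend EPG-representation of it. The gap is in the representation you actually sketch. You reserve a single grid edge $e$ and stipulate that \emph{all} $2n$ paths contain $e$, and then ask separately that each matched pair $P_{u_i},P_{w_i}$ share only finitely many points. These two requirements are flatly contradictory: if both $P_{u_i}$ and $P_{w_i}$ contain the grid edge $e$, they already share infinitely many points, so $u_i$ and $w_i$ would be adjacent in the represented graph. The step you flag as ``delicate'' is therefore not delicate but impossible, and no routing of the legs can rescue a construction organized around a universal common spine. (A side remark: you also do not need to argue that two bends are \emph{necessary} for the peeling; only sufficiency is relevant to the proposition.)

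The construction that does work (and is what Fig.~\ref{fig-cliques} (left) depicts, with overlapping segments drawn slightly offset for legibility) uses \emph{two} horizontal grid lines rather than one common edge. Every vertex is represented by a bracket-shaped 2-bend path with a vertical body and two horizontal legs, one on the bottom line and one on the top line. The $i$-th matched pair consists of one bracket opening to the left, whose bottom leg extends left to about $x=-i$ and whose top leg extends left only to about $x=+i$, and a mirror-image bracket opening to the right with the symmetric extents; the pairs are nested in $i$. Then two brackets opening to the same side always overlap on both lines; two brackets opening to opposite sides and belonging to pairs $i\neq j$ overlap on exactly one of the two lines (the bottom one if $i>j$, the top one if $i<j$); and the two brackets of the same pair overlap on neither line and share no grid edge. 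With that representation in hand, the adjacency check is the routine casework you describe, and the proposition follows from the count $2^{|V(K_{2n}^-)|/2}=2^n$ of maximal cliques.
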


The restricted structure of cliques in 1-bend graphs follows from the fact that the 1-bend paths representing pairwise adjacent vertices must all share at least one grid point. It is easy to observe that cliques in 2-bend graphs do not have such a simple structure. One could be inclined by Fig. \ref{fig-cliques} (left) that every maximal clique is contained in the union of two edge-cliques or claw-cliques (a similar situation appears in unit disk graphs and is the main ingredient of a polynomial algorithm for {\sc Clique} in these graphs -- see Clark {\em et al.} \cite{CCJ}). However, Fig. \ref{fig-cliques} (right) shows it is not true.
\begin{figure}[h]
\centering
\begin{tikzpicture}[scale=0.6]
\def\n{4}
\def\s{8}
\def\h{2}
\def\w{2.5}
\foreach \i in {0,...,\n}
{
  \draw (0-\i,0-\i/\s) -- (\w*\n/2+\i/\s,0-\i/\s) -- (\w*\n/2+\i/\s,\h+\i/\s) -- (0+\i,\h+\i/\s);
  \draw (-0.5-\i,0-\i/\s) -- (-\w*\n/2-\i/\s,0-\i/\s) -- (-\w*\n/2-\i/\s,\h+\i/\s) -- (-0.5+\i,\h+\i/\s);
}
\end{tikzpicture}
\hskip 0.5cm
\begin{tikzpicture}[xscale=0.3,yscale=0.5]
\def\n{4}
\def\s{6}
\foreach \i in {0,...,\n}
{
	\draw (0+\i,0) -- (0.5+\i,0) -- (0.5+\i,4+\i/10) -- (4*\n,4+\i/10);
	\draw (0,-0.1-\i/10) -- (\n+\i+2,-0.1-\i/10) -- (\n+\i+2,1.9) -- (\n+\i+2.5,1.9);
	\draw (\n+1,2+\i/10) -- (2.5*\n+\i+1,2+\i/10) -- (2.5*\n+\i+1,3.9) --++ (0.5,0);
}
\end{tikzpicture}
\caption{\textbf{Left:} $K_{10}^-$ as 2-bend graph. \textbf{Right:} A clique is not contained in two edge-cliques.}
\label{fig-cliques}
\end{figure}
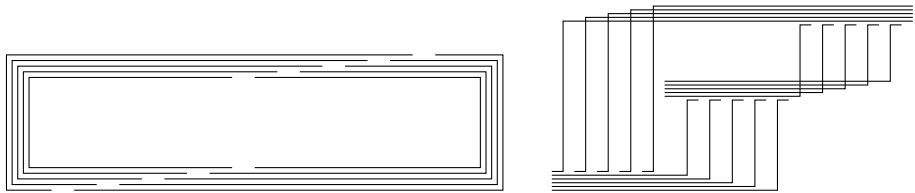

\section{Aligned 2-bend graphs}  \label{sec:aligned}
The main results of this section is the following complexity result.
\begin{theorem}
It is NP-complete to decide if a given graph is a 2-bend graph.
\label{thm:2linebend}
\end{theorem}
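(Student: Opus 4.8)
The plan is to prove NP-completeness of recognizing 2-bend graphs via a polynomial reduction from a known NP-hard problem, most naturally a constrained satisfiability problem like the \pn3sat{} that the preamble foreshadows (a macro is already defined for it). Membership in NP is the easy direction: a candidate 2-bend representation can be described by the coordinates of each path's three segments, and since one can argue (as for 1-bend graphs via \cite{AS}) that a representation on $n$ vertices fits in a grid of polynomial size, the certificate is polynomial and adjacency is checkable in polynomial time. So the real content is the hardness reduction.

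**The core approach** is a gadget construction. For each variable and each clause of the input formula I would build a local sub-configuration of paths whose only legal 2-bend representations (up to symmetry) encode the Boolean choices, and then wire these gadgets together so that a consistent global representation exists if and only if the formula is satisfiable. The first step is to engineer a \emph{rigidity} gadget: a small graph whose 2-bend representation is essentially forced, giving me a fixed coordinate frame to attach everything else to. Using the structural observations from the Preliminaries — that each path splits into a \emph{body} and two \emph{legs}, and that adjacency requires sharing infinitely many points (a full grid edge) while mere crossing does not — I would design a variable gadget admitting exactly two representations, one reading as \emph{true} and one as \emph{false}, typically by forcing a path to route its legs either up-or-down to decide which of two edge-cliques it joins. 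The clause gadget must then be satisfiable precisely when at least one incident literal-path is in the correct state, which I would enforce by a capacity/collision argument: three literal-paths competing for a shared region of the grid cannot all sit in the ``bad'' position simultaneously.

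**The hard part will be** controlling the global geometry. Unlike interval or 1-bend representations, 2-bend paths enjoy enormous freedom — exponentially many maximal cliques already appear, as the preceding Proposition shows — so the central difficulty is proving the \emph{converse} direction: that \emph{every} valid 2-bend representation of my constructed graph, not merely the intended one, respects the gadget semantics. I would handle this by a careful case analysis on how each path may bend (vertical versus horizontal, and the orientation of its legs), showing that the adjacency and non-adjacency constraints I impose leave no geometric escape. Forbidding paths from ``cheating'' through long detours or unexpected crossings is where most of the technical work lies, and where I would lean on auxiliary pendant vertices and forced edge-cliques to pin down the shape and relative position of key paths. The remark in the introduction that the reduction separates monotonic 2-bend graphs from general 2-bend graphs suggests that the gadgets should be buildable using only monotonic shapes ($\{\RZ,\N\}$) on one side, so that the same construction simultaneously proves that no polynomially recognizable class sits strictly between these two classes; I would design the gadgets with that stronger conclusion in mind from the outset.
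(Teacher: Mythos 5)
Your overall strategy---NP membership via a polynomial-size coordinate certificate, hardness via a reduction from \pn3sat, truth values encoded in the two possible layouts of a variable path, and the converse direction identified as the hard part---matches the paper's. But the plan is missing the three specific mechanisms that make the reduction actually work, and the mechanism you do propose is the one that fails. You want to ``engineer a rigidity gadget'' and ``pin down the shape and relative position of key paths'' with pendant vertices and forced edge-cliques; precisely because 2-bend paths have so much freedom (as you yourself note), no small gadget forces a coordinate frame. The paper does not force rigidity at all: it uses two service vertices $a,b$ whose paths are supposed to cover both legs of every variable/occurrence path, accepts that this can fail in a bounded number of ``pathological'' positions (the $12$ segment ends and at most $4$ intersection points of $P_a$ and $P_b$, giving at most $20$ bad spots), and replicates the formula $21$ times over disjoint variable sets so that at least one copy is clean. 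This ``quantitative trick'' is the load-bearing idea replacing your rigidity gadget, and nothing in your plan substitutes for it.

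Second, your clause gadget is justified only by a vague ``capacity/collision argument.'' The actual geometric lemma you need is: a 2-bend path cannot share grid edges with three parallel, pairwise non-collinear segments, while it can with two parallel and one perpendicular segment. This is what ties the NAE semantics (at least one true \emph{and} at least one false literal) to the orientation encoding (horizontal $=$ false, vertical $=$ true). Third, there is a loophole your plan does not anticipate: if two of the three occurrence bodies happen to be \emph{collinear}, a single 2-bend path can meet all three even in an all-true clause. The paper closes this by attaching four clause vertices $c_z,d_z,e_z,f_z$ (one per pair plus the triple, mutually non-adjacent) and invoking the fact that two collinear segments cannot both be adjacent to two mutually non-adjacent 2-bend paths; with only one clause vertex, as in your sketch, the converse direction is simply false as stated. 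Your closing observation that the satisfiable-side representation should use only monotonic shapes is correct and is exactly how the paper derives the sandwiching corollary.
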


\begin{proof}
The NP-membership is obvious. As a polynomial certificate we use a list
of coordinates denoting start- and end-points of straight-line
segments. Such a representation has polynomial size with respect to the given graph.

For the NP-hardness we use a polynomial reduction from \pn3sat.
The instance of this problem is a set of clauses, each containing three non-negated variables. We ask for the existence of a truth assignment, such that each clause contains at least one true variable and at least one false variable (we say that such a clause is {\em satisfied}). The problem is NP-complete and equivalent to 2-coloring of 3-uniform hypergraphs (see Lov\'asz \cite{Lovasz}).

For a given formula $\varphi$, we shall construct a graph $G$, which is a 2-bend graph if and only if the formula is satisfiable. We start by replicating $\varphi$ 21 times (each time over a distinct copy of the set of variables), obtaining an equivalent formula $\varphi'$. The reason of this operation will be made clear in a while.

We start the construction of $G$ with two special vertices $a$ and $b$. Then for each variable $i$ of $\varphi'$, we add a vertex $v_i$ adjacent to both $a$ and $b$. For each occurence of $i$ in a clause $z$ of $\varphi'$, we add another vertex $o_{i,z}$, adjacent to $a$, $b$, and $v_i$.
Finally, for each clause $z=(i,j,k)$ we add four mutually non-adjacent vertices $c_z$, $d_z$, $e_z$, and $f_z$, with the following neighbors: $N(c_z) = \{o_{i,z}, o_{j,z}, o_{k,z}\}$; $N(d_z)=\{o_{i,z}, o_{j,z}\}$; $N(e_z)=\{o_{i,z}, o_{k,z}\}$; and $N(f_z)=\{o_{j,z}, o_{k,z}\}$ (see Fig.~\ref{aa:global} (left)).

\begin{figure}[ht]
\begin{tikzpicture}[xscale=2.5,yscale=1]
\node[circle, fill=black, fill=black,label={below:$d_z$}] (z12) at (0.5,1.5) {};
\node[circle, fill=black, fill=black,label={below:$f_z$}] (z23) at (1.5,1.5) {};
\node[circle, fill=black, fill=black,label={above:$c_z$}] (z123) at (1,1.9) {};
\node[circle, fill=black, fill=black,label={left:$e_z$}] (z13) at (1,2.7) {};
\node[circle, fill=black, fill=black,label={left:$o_{1,z}$}] (o1) at (0,1) {};
\node[circle, fill=black, fill=black,label={right:$o_{2,z}$}] (o2) at (1,1) {};
\node[circle, fill=black, fill=black,label={right:$o_{3,z}$}] (o3) at (2,1) {};
\node[circle, fill=black, fill=black,label={left:$v_1$}] (v1) at (0,0) {};
\node[circle, fill=black, fill=black,label={left:$v_2$}] (v2) at (1,0) {};
\node[circle, fill=black, fill=black,label={right:$v_3$}] (v3) at (2,0) {};
\node[circle, fill=black,label={left:$a$}] (a) at (0.5,-0.5) {};
\node[circle, fill=black,label={right:$b$}] (b) at (1.5,-0.5) {};
\draw (v1) -- (a);
\draw (v2) -- (a);
\draw (v3) -- (a);
\draw (v1) -- (b);
\draw (v2) -- (b);
\draw (v3) -- (b);
\draw (o1) -- (a);
\draw (o2) -- (a);
\path (o3) edge   [bend right] (a);
\path (o1) edge   [bend left] (b);
\draw (o2) -- (b);
\draw (o3) -- (b);
\draw (v1) -- (o1);
\draw (v2) -- (o2);
\draw (v3) -- (o3);
\draw (z12) -- (o1);
\draw (z12) -- (o2);
\draw (z23) -- (o2);
\draw (z23) -- (o3);
\path (z123) edge   [bend right] (o1);
\path (z123) edge   (o2);
\path (z123) edge   [bend left] (o3);
\path (z13) edge   [bend right] (o1);
\path (z13) edge   [bend left] (o3);
\end{tikzpicture}
\hfill\epsfbox{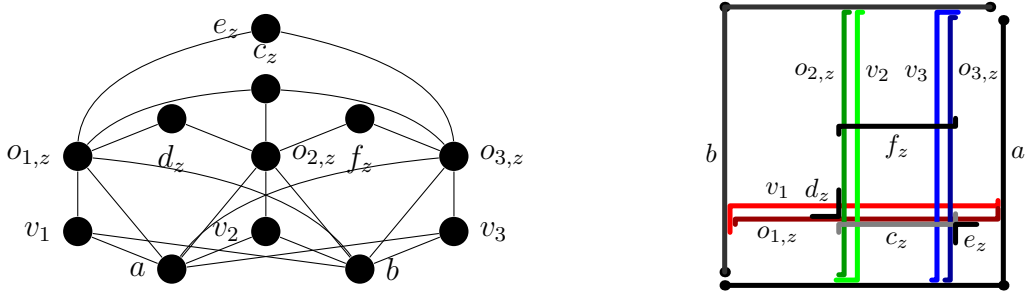}
\caption{\textbf{Left:} The graph obtained from a formula consisting of a single clause
$z=(1, 2, 3)$. For clarity we did not replicate the formula. 
\textbf{Right:} An EPG-representation of the graph on the left. The variable $1$ is false, while 2 and 3 are true.}
\label{aa:global}
\end{figure}

Now let us explain the main ideas behind the reduction. The purpose of vertices $a$ and $b$ is to cover the legs of each $P_{v_i}$ and $P_{o_{i,z}}$, keeping just their bodies exposed for possible intersections with clause-vertices.
This assumption may fail, as some $P_{v_i}$ or $P_{o_{i,z}}$ can be positioned over an end of a segment of $P_a$ or $P_b$, or on an intersection point of $P_a$ and $P_b$. However, each end can be occupied by paths corresponding to variables of only one copy of $\phi$. Moreover, each intersection point can interact with the representants of variables of at most two copies (see Fig. \ref{aa:6exits}).
\begin{figure}
\centering
 \epsfxsize=3.5cm\epsfbox{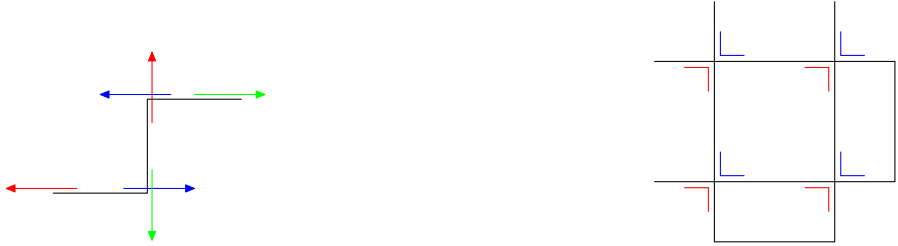}
\hskip 5cm
\begin{tikzpicture}[scale=0.8]
\draw (0,1) -- (4,1) -- (4,3) -- (0,3);
\draw (1,4) -- (1,0) -- (3,0) -- (3,4);
\draw[color=red] (0.5,0.9) --++ (0.4,0) --++ (0,-0.4);
\draw[color=blue] (1.1,1.5) --++ (0,-0.4) --++ (0.4,0);
\draw[color=red] (2.5,0.9) --++ (0.4,0) --++ (0,-0.4);
\draw[color=blue] (3.1,1.5) --++ (0,-0.4) --++ (0.4,0);
\draw[color=red] (0.5,2.9) --++ (0.4,0) --++ (0,-0.4);
\draw[color=blue] (1.1,3.5) --++ (0,-0.4) --++ (0.4,0);
\draw[color=red] (2.5,2.9) --++ (0.4,0) --++ (0,-0.4);
\draw[color=blue] (3.1,3.5) --++ (0,-0.4) --++ (0.4,0);
\end{tikzpicture}
\caption{\textbf{Left:} 6 pairwise non-adjacent segments may exit a 2-bend path without having to bend inside it. \textbf{Right:} At most 8 pairwise non-adjacent 2-bend paths may be adjacent to both $P_a$ and $P_b$ and contain their intersection point.}
\label{aa:6exits}
\end{figure}
As $P_a$ and $P_b$ have (together) 12 ends of segments and at most 4 intersection points, we have at most 20 special situations. But since  $\varphi$  is replicated 21 times, we are sure that for at least one copy of $\varphi$ our assumption holds (this type of trick we call the ``quantitative trick'' and we use it to cope with some obstructions which may appear only a constant number of times).
Moreover, note that if we want to make $P_{u_{i,z}}$ adjacent to both $P_a$ and $P_b$ using its body, it needs to cover either one of the ends, or on intersection point of $P_a$ and $P_b$, which can be excluded because of the ``quantitative trick''.
Therefore, if we want to keep the clause-vertices non-adjacent to $v_i$, one leg of each $P_{o_{i,z}}$ should be adjacent to $P_a$, the other one to $P_b$, and at least one of them also to $P_{v_i}$. This is summed up in the following claim.

\begin{clm} \label{clm-clean}
There is a copy of $\varphi$ in $\varphi'$, in which
\begin{enumerate}
\item each leg of each $P_{v_i}$ and $P_{o_{i,z}}$ are entirely covered by one of $P_a$ and $P_b$,
\item one leg of each $P_{o_{i,z}}$ is adjacent to $P_a$ and the other one is adjacent to $P_b$. Also, at least one of them has to be adjacent to $P_{v_i}$.
\end{enumerate}
\end{clm}

Let us focus on this ``clean'' copy of $\varphi$ in $\varphi'$. 
The body of each $P_{o_{i,z}}$ is exposed for representing clause-related vertices. Moreover, the orientation of the body (and thus of the whole path) is the same as the orientation of $P_{v_i}$. Thus we obtain the following.

\begin{clm} \label{clm-synchro}
For each variable $v_i$, all paths $P_{v_i}$ and $P_{o_{i,z}}$ (for every $z$) have the same orientiation, i.e., they are all either horizontal or vertical.
\end{clm}

The property above is crucial for the consistency of our construction, because the orientation of the paths will decide on truth assignment (horizontal means false, vertical means true). Whenever we say that two paths representing variables or their occurences are {\em synchronized}, we mean both equality of their truth-assignments and their horizontal/vertical layout.

First we show irrepresentabililty of the graph for an unsatisfiable formula.
Let $z=(i,j,k)$ be an unsatisfied clause. We will show that it cannot be represented.
Observe that it is impossible to have a 2-bend path adjacent to three parallel, pairwise non-collinear segments, while it is possible for two parallel and one perpendicular segments (see Fig.~\ref{aa:claumean}). 

\begin{figure}
\centering
\epsfxsize=5cm\epsfbox{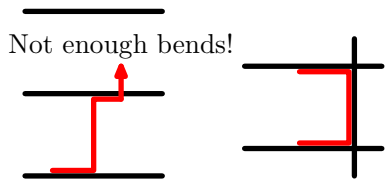}
\caption{It is impossible to intersect three parallel, pairwise non-collinear segments with a 2-bend path, while two parallel and one perpendicular segments can be intersected.}
\label{aa:claumean}
\end{figure}

The situation with three
parallel segments corresponds to all-true or all-false clause.
So, if no pair of middle segments of $P_{o_{i,z}}$, $P_{o_{j,z}}$, $P_{o_{k,z}}$ (and thus $P_{v_i}$, $P_{v_j}$, $P_{v_k}$) is collinear, we cannot represent $c_z$.

However, it might still happen that the bodies of, say, $P_{o_{i,z}}$ and $P_{o_{j,z}}$ are lying on the same line. But this pair of segments cannot be adjacent to more than one 2-bend path (see Fig.~\ref{aa:dvespojky} (left)). So if we represent $c_z$, then we cannot represent $d_z$, $e_z$, or $f_z$. This shows irrepresentability of an unsatisfied clause.

\begin{figure}
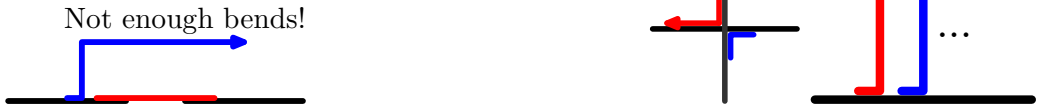

\epsfxsize=4cm\epsfbox{bendy.3}\hfill
\epsfxsize=2cm\epsfbox{bendy.4}
\epsfxsize=3cm\epsfbox{bendy.5}
\caption{{\bf Left:} Two collinear segments cannot be adjacent to two
mutually non-adjacent 2-bend paths.
{\bf Right:} This is possible for two mutually intersecting segments or two non-collinear parallel segments.}
\label{aa:dvespojky}
\end{figure}

For a representable formula, we build a canonical representation shown in Fig.~\ref{aa:global} (right) (for a clause with one false and two true literals, we rotate everything except for $a$ and $b$ by 90 degrees). Figure~\ref{aa:global} shows one clause in one of 21 replicated copies and one occurence of each variable. The full construction with all 21 copies would consist of 21 copies of all items present in the picture, except for $P_a$ and $P_b$. Note that there are no edges between vertices belonging to different copies of $\varphi$. Further occurences, e.g., of $v_2$ in the same formula can be represented next to $o_{2,z}$ intersecting $v_2$ in the bottom (or top) horizontal leg (where it simultaneously intersects $a$ or $b$, respectively, and it has to avoid legs of other possible occurences). Anyway, their truth assignments are synchronized in all possible cases as they have to intersect $a$ or $b$ together with the vertex representative $v_2$. Considering two (and more) clauses in the representation, each clause has its own occurences, so the representation of one clause does not influence representations of other clauses (as representatives of distinct occurences are not mutually adjacent, i.e., they are disjoint up to finitely many points).
 In this representation, the body of each $P_{o_{i,z}}$ intersects the body of each $P_{o_{j,z}}$, for all $i$ evaluated to true and $j$ evaluated to false. Thus it is possible to represent all clause-vertices, just as depicted.
\end{proof}

\subsection{Subclasses of aligned 2-bend graphs} \label{ss:subclasses}

Here we focus on the recognition of particular subclasses of 2-bend graphs. Note that as there are many classes (whose recognition
is often NP-hard), it is important to ask whether even some polynomially
recognizable class can exist ``in between''. This concept is called {\em sandwiching}.
Formally, having two classes of graphs ${\cal A}\subseteq {\cal B}$, a class
$\cal C$ is {\em sandwiched between $\cal A$ and $\cal B$} if ${\cal A}
\subseteq {\cal C} \subseteq {\cal B}.$
For optimization problems, it holds that if an algorithm works for
class $\cal B$, it works also for the class $\cal A$. Also a hardness result for $\cal A$ carries over to $\cal B$. However, the recognition
problem behaves in a different way. As a trivial example we may pick a class $\cal A$
containing only complete graphs (this class is polynomially recognizable),
for class $\cal B$ we may take class of all graphs (which is also polynomially
recognizable) and between them we can find, e.g., classes of 2-bend graphs,
whose recognition is NP-complete, as shown in Theorem \ref{thm:2linebend}. Similarly,
between two NP-hard classes, a polynomially-recognizable class can be
sandwiched (consider e.g. 3-colorable planar graphs, planar graphs, and 4-colorable graphs).

In this section we do not only show the recognition hardness of individual
classes, but we are trying to find a smallest class $\cal A$ and a largest class $\cal B$, such that no polynomially-recognizable class can be sandwiched between them.

We start with first two subclasses where our reduction
for 2-bend graphs can be applied directly. One of them
is a class of monotonic 2-bend graphs (i.e., $\{\RZ,\N\}$-graphs) and the other is the class of $\{\RZ,\RN\}$-graphs. 

We observe that in the proof of  Theorem~\ref{thm:2linebend} we produce
a monotonic 2-bend graph from each satisfiable formula. As a
non-satisfiable formula cannot be represented by any 2-bend graph,
if there was a polynomially-recognizable class between monotonic 2-bend
graphs and 2-bend graphs, we would be able to distinguish satisfiable
formulae from non-satisfiable ones, showing P=NP.

It is very simple to redraw the representation used in the proof of  Theorem~\ref{thm:2linebend}, using only \RZ\; and \RN\;-shapes.

\begin{corollary} \label{cor:mono-zig-zag}
It is NP-complete to recognize monotonic 2-bend graphs and $\{\RZ,\RN\}$-graphs.
Moreover, between 2-bend graphs and any of these classes, or even their intersection, 
no polynomially recognizable class can be sandwiched (unless P=NP).
\end{corollary}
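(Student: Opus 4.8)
The plan is to piggyback on the reduction already constructed for Theorem~\ref{thm:2linebend}, making two observations: first, that the canonical representation produced from a satisfiable formula can be realized using only the two shapes \RZ\ and \N\ (respectively \RZ\ and \RN); and second, that the hardness does not merely give NP-completeness for these subclasses, but actually rules out any polynomially recognizable ``sandwich'' class. The key structural fact we exploit is the asymmetry already established in the main proof: a \emph{satisfiable} instance yields a graph admitting a 2-bend representation that we can force into monotonic (or \RZ/\RN) shape, while an \emph{unsatisfiable} instance admits \emph{no} 2-bend representation whatsoever. This one-sided behavior is exactly what makes sandwiching impossible.

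First I would argue NP-completeness of recognizing monotonic 2-bend graphs (the $\{\RZ,\N\}$-graphs). NP-membership is immediate, as in Theorem~\ref{thm:2linebend}, since a list of segment endpoints certifies membership and one can additionally verify in polynomial time that every path has one of the two allowed shapes. For hardness, I would reuse the graph $G$ built from a formula $\varphi$ in the proof of Theorem~\ref{thm:2linebend}. The crucial point is that the canonical representation exhibited in Fig.~\ref{aa:global}~(right) already uses only monotonic paths: each variable-path and occurrence-path is drawn as a \RZ\ or \N, and each clause-vertex is drawn so as to meet the appropriate bodies. Thus if $\varphi$ is satisfiable, $G$ is a monotonic 2-bend graph. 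Conversely, if $\varphi$ is unsatisfiable, then by the irrepresentability argument of Theorem~\ref{thm:2linebend}, $G$ is not even a 2-bend graph, hence a fortiori not a monotonic 2-bend graph. So $G$ is a monotonic 2-bend graph if and only if $\varphi$ is satisfiable, giving NP-completeness. The identical argument with the shapes redrawn as \RZ\ and \RN\ (the ``very simple redrawing'' noted before the statement) handles the $\{\RZ,\RN\}$-graphs, and the same works for the intersection $\{\RZ\}$-graphs of the two classes, provided the canonical drawing can be normalized to use only \RZ\ shapes for the variable/occurrence paths.

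The main obstacle, and the point deserving the most care, is the sandwiching claim rather than the bare NP-completeness. Here I would spell out the argument sketched in the text preceding the corollary. Let $\cal A$ be monotonic 2-bend graphs (or $\{\RZ,\RN\}$-graphs, or their intersection) and let $\cal B$ be all 2-bend graphs, so ${\cal A}\subseteq{\cal B}$. Suppose for contradiction that some polynomially recognizable class $\cal C$ satisfies ${\cal A}\subseteq{\cal C}\subseteq{\cal B}$. Given a formula $\varphi$, construct $G$ in polynomial time and test membership of $G$ in $\cal C$. If $\varphi$ is satisfiable then $G\in{\cal A}\subseteq{\cal C}$, so the test accepts; if $\varphi$ is unsatisfiable then $G\notin{\cal B}\supseteq{\cal C}$, so the test rejects. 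Hence membership in $\cal C$ decides satisfiability of \pn3sat\ in polynomial time, forcing $\text{P}=\text{NP}$. The only delicate part is making sure the reduction genuinely lands the satisfiable images inside the \emph{smallest} class under consideration and the unsatisfiable images outside the \emph{largest} class; both are already guaranteed by Theorem~\ref{thm:2linebend}, since satisfiable formulae yield monotonic (indeed \RZ/\RN-drawable) representations and unsatisfiable formulae yield non-2-bend graphs. This two-sided tightness is what lets the conclusion hold for every class sandwiched between them, including their intersection.
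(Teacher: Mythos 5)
Your proposal is correct and follows essentially the same route as the paper: the canonical representation from Theorem~\ref{thm:2linebend} is already monotonic (and easily redrawn with \RZ\ and \RN\ only), unsatisfiable formulae yield graphs that are not even 2-bend graphs, and the sandwiching argument is exactly the one the paper sketches before the corollary. One minor caution: the intersection of the two classes need not coincide with the class of $\{\RZ\}$-graphs as you suggest, but this does not affect your argument, since membership of the constructed graph in the intersection already follows from exhibiting both representations separately.
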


Now we shall modify the construction a bit to show a cascade of further
results. 
Note that there are four possible patterns of horizontal paths (\RU, \U, \N, \RN) and another four for vertical paths. As we want to show that it is NP-complete to recognize graphs of any class $X \in \{\RU, \U, \N, \RN\} \times \{\C,\RC,\Z,\RZ\}$, we need to start with exploring the symmetries, to classify possible classes $X$.

So consider a pair or shapes, one of which is horizontal and the other one is vertical.
If both legs of each shape bend in the same direction, we obtain the class $\{\RU,\RC\}$, which is equivalent to each $\{\U,\RC\}$, $\{\RU,\C\}$, and $\{\U,\C\}$ (consider a rotation of flipping of an EPG-representation).
If both legs of one shape bend in the same direction, and the legs of the other shape bend in opposite directions, we get the class $\{\RU, \Z\}$ (again, up to symmetry). Finally, if the legs of both shapes bend in opposite directions, we get two possibilities, i.e., $\{\RZ, \N\}$ (monotonic 2-bend graphs) and $\{\RZ, \RN\}$. Although for the latter two classes we have already shown NP-hardness, now we show yet one construction that works for all four cases. Such a general construction is important from the point of view of sandwiching.

The new construction, in fact, is just a simplified version of the one in the proof of Theorem \ref{thm:2linebend}. Again, for a formula $\varphi$, we replicate it to obtain $\varphi'$ (using ``quantitative trick'', see Claim \ref{clm-clean}) and introduce variable-vertices $v_i$ and occurence-vertices $o_{i,z}$. The difference is that now each clause $z=(i,j,k)$ is represented by just one vertex $c_z$, adjacent to $o_{i,z}, o_{j,z}$, and $o_{k,z}$ (so we omit vertices $d_z, e_z$, and $ f_z$). For a formula $\varphi$, let us call such constructed graph $G(\varphi')$.

Using this construction we can show that it is NP-complete to recognize $X$-graphs for each of the pairs $X$ of permitted shapes, one of which is vertical and the other horizontal. Let us start with proving the following statement.

\begin{lemma} \label{lem:rep}
If $\varphi$ is a satisfiable \pn3sat formula, then $G(\varphi')$ can be represented by any of the following pairs of shapes: $\{\RU,\RC\}, \{\RU,\Z\}, \{\RZ, \N\}, \{\RZ, \RN\}$.
\end{lemma}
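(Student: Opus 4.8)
The plan is to exhibit, for each of the four pairs of shapes, an explicit canonical EPG-representation of $G(\varphi')$ that mirrors the one from the proof of Theorem~\ref{thm:2linebend}, but simplified (since we now have only a single clause-vertex $c_z$ instead of four). The guiding principle is the same as before: vertical paths encode \emph{true} variables and horizontal paths encode \emph{false} variables, and the paths $P_a$, $P_b$ serve to cover the legs of all variable- and occurrence-paths, leaving only their bodies exposed. First I would fix a satisfying assignment of $\varphi$, which by the replication also satisfies $\varphi'$, and lay out $P_a$ and $P_b$ as two long paths (one horizontal, one vertical, up to the symmetry of the chosen shape class) whose segments provide the ``docking'' structure for the legs. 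Then for each variable $i$ I place $P_{v_i}$ either vertically or horizontally according to its truth value, and for each occurrence $o_{i,z}$ I place $P_{o_{i,z}}$ parallel to $P_{v_i}$ (consistent with Claim~\ref{clm-synchro}), with one leg covered by $P_a$, the other by $P_b$, and at least one leg also covering the body of $P_{v_i}$.

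The main content is then representing the single clause-vertex $c_z$ for each clause $z=(i,j,k)$. Since $\varphi$ is satisfiable, each clause has both a true and a false literal, so among $P_{o_{i,z}}, P_{o_{j,z}}, P_{o_{k,z}}$ at least one body is vertical and at least one is horizontal. By the observation recorded in Figure~\ref{aa:claumean} (two parallel and one perpendicular segments \emph{can} be intersected by a single 2-bend path), I can route $P_{c_z}$ so that it is adjacent to all three occurrence-bodies. Concretely, I would position the bodies so that the two like-oriented occurrence-bodies are collinear (as in the right-hand pictures of Figure~\ref{aa:dvespojky}), and then $c_z$ runs along that common line to catch both of them and bends once to reach the single perpendicular occurrence-body. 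This is exactly the step where the four shape classes must be handled, but they differ only by reflections/rotations of the legs, so once I have the layout for $\{\RZ,\N\}$ I obtain the others by flipping the orientation of the legs of $c_z$ and of the occurrence-paths consistently.

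The bulk of the work is therefore bookkeeping: verifying that no spurious adjacencies are created. I would check that paths of distinct occurrences of the same variable, and paths belonging to distinct clauses, meet in only finitely many points (they are placed in disjoint horizontal/vertical ``tracks'' next to one another, as described in the proof of Theorem~\ref{thm:2linebend}), and that $P_{c_z}$ does not accidentally run along a grid edge shared with any $P_{v_i}$ or with an occurrence-body from another clause. The ``quantitative trick'' via $\varphi'$ guarantees the existence of a clean copy in which the legs behave as intended, so I may assume the idealized leg-covering of Claim~\ref{clm-clean}.

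The hard part will be presenting the four shape classes uniformly rather than drawing four separate figures: the challenge is to argue cleanly that the adjacency $c_z \sim o_{i,z}, o_{j,z}, o_{k,z}$ is realizable in each class using only the allowed bend-directions, and that forcing two occurrence-bodies to be collinear does not conflict with the leg-covering requirement dictated by the particular shape. I expect to handle this by treating $\{\RZ,\N\}$ as the base case (reusing the canonical picture from Theorem~\ref{thm:2linebend} with $d_z,e_z,f_z$ deleted), and then describing the other three as explicit symmetry-images, checking in each case only that $P_{c_z}$ can be drawn with at most two bends in the permitted orientation.
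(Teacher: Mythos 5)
Your overall strategy (a canonical ``frame'' built from $P_a,P_b$, orientation of the body encoding the truth value, and the clause vertex realized through a ``two parallel plus one perpendicular'' configuration) matches the paper's, but your key geometric step is different and fails. You propose to place the two like-oriented occurrence bodies of a clause on a common line and let $P_{c_z}$ run along that line, bending once to reach the perpendicular body. The paper instead keeps those two bodies parallel but \emph{non-collinear}, lets the single perpendicular occurrence body cross both of them, and routes $P_{c_z}$ with its body along the perpendicular segment and its two legs on the two parallel ones. Your collinear arrangement is incompatible with the leg-covering requirement of Claim~\ref{clm-clean}: for two same-orientation occurrence paths to have collinear bodies on a line $\ell$ while each keeps one leg on $P_a$ and one on $P_b$, the line $\ell$ must be met by segments of $P_a$ and of $P_b$ at least twice each. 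For the monotone classes $\{\RZ,\N\}$ and $\{\RZ,\RN\}$ this is outright impossible, since a monotone 2-bend path meets any axis-parallel line at most once --- this is exactly the observation in Case~3 of the proof of Lemma~\ref{lem:pairs} --- and for $\{\RU,\RC\}$ and $\{\RU,\Z\}$ it forces the degenerate $a,b,a,b$ / $a,b,b,a$ configurations which the paper analyzes only in order to rule them out. So $P_{c_z}$ cannot be drawn as you describe; the non-collinear layout is needed.

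A second gap is your plan to treat $\{\RZ,\N\}$ as a base case and obtain the other three pairs as ``symmetry-images'' of it. This cannot work: the four pairs $\{\RU,\RC\}$, $\{\RU,\Z\}$, $\{\RZ,\N\}$, $\{\RZ,\RN\}$ are precisely the representatives of the four \emph{distinct} equivalence classes of vertical--horizontal shape pairs under rotation and reflection (this is how the paper arrives at the list in the first place), so no symmetry of the plane carries a representation in one class to a representation in another. Each pair requires its own verification that the frame for $P_a,P_b$ (``two opposite right angles'') and the clause paths can be drawn with the permitted bend directions; each such check is short, but there are genuinely four of them, not one plus three reflections.
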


\begin{proof}
We will represent $G(\varphi')$ in a way similar to Fig. \ref{aa:global} (right), i.e., truth assignment of the variable $i$ reflects whether we use the vertical or the horizontal shape to represent $v_i$ and all $o_{i,z}$'s. To represent vertices $a$ and $b$, we need two opposite right angles. This way we create a ``frame'' from the canonical representation.
Then, the vertical (horizontal) shape can represent variables and occurences assigned true (false, resp.). The synchronization of occurences and variables works in exactly the same way as described in the proof of Theorem \ref{thm:2linebend}.
Consider a clause $z=(i,j,k)$. Without loss of generality assume that $i$ and $j$ are assigned true and $k$ is assigned false (all other cases are symmetric). Then $o_{i,z}$ and $o_{j,z}$ are represented by vertical shapes (with non-collinear bodies), while $o_{k,z}$ is represented by the horizontal shape intersecting the bodies of both $P_{o_{i,z}}$ and $P_{o_{j,z}}$. Since the horizontal shape can be made adjacent to each $P_{o_{i,z}}$, $P_{o_{j,z}}$, and $P_{o_{j,k}}$, we can represent each clause vertex $c_z$.
\end{proof}

Now we are ready to show the following.

\begin{lemma}\label{lem:pairs}
It is NP-complete to recognize $X$-graphs, for any $X \in \{\RU, \U, \N, \RN\} \times \{\C,\RC,\Z,\RZ\}$.
\end{lemma}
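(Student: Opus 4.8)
The plan is to prove both directions of the NP-completeness via a single reduction from \pn3sat, using the simplified gadget $G(\varphi')$ and collapsing the sixteen pairs $X$ to only four by symmetry. NP-membership is immediate: exactly as in Theorem~\ref{thm:2linebend}, a representation is certified by the list of segment endpoints, and for a fixed $X$ one additionally checks in polynomial time that the shape of each path indeed lies in $X$. For the hardness, I first observe that any global isometry of the grid (a rotation by a multiple of $90^\circ$ or an axis-reflection) carries an $X$-representation into an $X'$-representation, where $X'$ is the isometric image of $X$; hence recognizing $X$-graphs and $X'$-graphs are polynomially equivalent. As noted before the statement, this collapses all sixteen pairs into the four representatives $\{\RU,\RC\}$, $\{\RU,\Z\}$, $\{\RZ,\N\}$, and $\{\RZ,\RN\}$, so it suffices to prove hardness for these.

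For each representative $X$ I reduce from \pn3sat by constructing $G(\varphi')$. The completeness direction — if $\varphi$ is satisfiable then $G(\varphi')$ is an $X$-graph — is exactly Lemma~\ref{lem:rep}, which supplies a canonical $X$-representation for precisely these four pairs. The remaining, and principal, task is soundness: from any $X$-representation of $G(\varphi')$ I must recover a satisfying NAE-assignment of $\varphi$. First I would transport the structural analysis of Theorem~\ref{thm:2linebend} to $G(\varphi')$: the quantitative trick yields a \emph{clean} copy of $\varphi$ in which every leg of every $P_{v_i}$ and $P_{o_{i,z}}$ is covered by $P_a$ or $P_b$, with one leg on each of $P_a,P_b$ and at least one leg shared with $P_{v_i}$ (Claim~\ref{clm-clean}), and in which each $P_{o_{i,z}}$ has the same orientation as $P_{v_i}$ (Claim~\ref{clm-synchro}). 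Reading ``vertical $=$ true, horizontal $=$ false'' then defines a consistent truth assignment.

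The key step — and the one place where the argument must differ from Theorem~\ref{thm:2linebend}, since $G(\varphi')$ drops the auxiliary vertices $d_z,e_z,f_z$ — is to show that an unsatisfied clause cannot be represented. If a clause $z=(i,j,k)$ is monochromatic under the induced assignment, then $P_{o_{i,z}},P_{o_{j,z}},P_{o_{k,z}}$ all use the \emph{same} shape of $X$, so their bodies are parallel. In the unrestricted setting one still had to exclude two of these bodies being collinear, which is exactly why $d_z,e_z,f_z$ were introduced (Fig.~\ref{aa:dvespojky}). Here I would instead exploit that $X$ contains a \emph{single} shape of each orientation: in a clean copy the two legs of a like-oriented occurrence are pinned to $P_a$ and to $P_b$ respectively, and since the shape is fixed, the two endpoints of every such body lie at the same pair of coordinates on the two frame-paths. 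Consequently all like-oriented occurrence-bodies have the same extent, so two of them are collinear only if they overlap — but then the corresponding occurrences would be adjacent, contradicting that distinct occurrences are non-adjacent in $G(\varphi')$. Hence the three bodies of a monochromatic clause are pairwise non-collinear, and by the obstruction of Fig.~\ref{aa:claumean} no $2$-bend path, in particular no shape of $X$, can be adjacent to all three; thus $c_z$ is unrepresentable.

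This forces every clause to be NAE-satisfied, so $\varphi$ is satisfiable, completing soundness. Together with Lemma~\ref{lem:rep} and the symmetry reduction, this gives NP-hardness for all four representatives and hence for every $X\in\{\RU,\U,\N,\RN\}\times\{\C,\RC,\Z,\RZ\}$, which with NP-membership yields the claim. I expect the main obstacle to be making the ``pinned endpoints'' argument fully rigorous — in particular, verifying via the quantitative trick that in the clean copy all occurrences attach to the \emph{bodies} of $P_a$ and $P_b$ (rather than to their legs or bend-points), so that a unique pair of attachment coordinates is genuinely forced for each orientation; this is precisely the feature that makes the simplified gadget $G(\varphi')$ sound in the restricted setting even though it is not sound for unrestricted $2$-bend graphs.
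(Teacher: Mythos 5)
There is a genuine gap in your soundness argument, and it sits exactly where the paper has to work hardest. Your key claim is that in a clean copy all like-oriented occurrence-bodies have ``the same extent,'' so two of them can be collinear only if they overlap, whence the bodies of a monochromatic clause are pairwise non-collinear and the obstruction of Fig.~\ref{aa:claumean} kills $c_z$. This claim is false. Claim~\ref{clm-clean} only forces each leg of $P_{o_{i,z}}$ to lie on \emph{some} segment of $P_a$ and on \emph{some} segment of $P_b$; it does not pin the attachment coordinates. For $X=\{\RU,\RC\}$ the frame paths $P_a,P_b$ are themselves \RU-shapes, each contributing \emph{two} vertical segments, so a horizontal occurrence can hang its left leg on either leg of $P_a$ and its right leg on either leg of $P_b$ (four possible $x$-spans), and the height of its body is completely unconstrained. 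Consequently two non-adjacent occurrences can have collinear, disjoint bodies --- e.g.\ in the crossing pattern $a,b,a,b$ along a horizontal line, which is precisely the configuration the paper draws in Fig.~\ref{fig:abab-abba} and then analyzes at length. Your argument would declare this configuration impossible and stop; the paper instead shows that in this regime all occurrence representatives are forced to be \RU-shapes, and then extracts the satisfying assignment \emph{not} from the horizontal/vertical dichotomy (which is degenerate here --- everything is horizontal) but from the left/right order of the collinear bodies within each clause. Your proposed reading ``vertical $=$ true, horizontal $=$ false'' therefore cannot be the whole story: there are valid $\{\RU,\RC\}$-representations of $G(\varphi')$ in which it yields an all-false, hence unsatisfying, assignment.

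The parts of your write-up that do match the paper are the NP-membership certificate, the reduction of the sixteen pairs to four representatives by grid isometries, and the use of Lemma~\ref{lem:rep} for the completeness direction. Your collinearity argument is essentially sound for the monotone pairs $\{\RZ,\N\}$ and $\{\RZ,\RN\}$ (though the paper derives non-collinearity more simply, from the fact that a monotone path meets any axis-parallel line at most once, rather than from pinned endpoints). But for $\{\RU,\RC\}$ and $\{\RU,\Z\}$ you would need to add the case analysis of the possible placements of $P_a$ and $P_b$, the separation of the ``red'' and ``blue'' zones showing that horizontal and vertical occurrence representatives cannot coexist usefully, and the alternative order-based truth assignment for the all-horizontal case. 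Without these, the soundness direction is incomplete.
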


\begin{proof}
By Lemma \ref{lem:rep}, we know that if $\varphi$ is a satisfiable formula, then $G(\varphi)$ is an $X$-graph for any $X$. To complete the proof we will show that if $G(\varphi')$ is an $X$-graph, then $\varphi$ is a satisfiable \pn3sat formula.

By the ``quantitative trick'' we can assume that in this representation no pathological situations happen, i.e., Claim \ref{clm-clean} holds. Also, by Claim \ref{clm-synchro}, we know that each $P_{o_{i,z}}$ is synchronized with $P_{v_i}$ (they are all either horizontal or vertical).

We consider possible (up to symmetry) classes $X$ separately.

\noindent \textbf{Case 1. $X = \{\RU, \RC\}$.}

First, suppose that no pair of bodies of occurence-representatives is collinear. If for every clause $z=(i,j,k)$, at least one of $P_{o_{i,z}},P_{o_{j,z}},P_{o_{k,z}}$ is represented by a $\RU$-shape and at least one by a $\RC$-shape, then we can find an assignment satisfying $\varphi$, which contradicts our assumption. On the other hand, if the bodies of $P_{o_{i,z}},P_{o_{j,z}},P_{o_{k,z}}$ are parallel and pairwise non-collinear, then by the previous reasoning, we cannot construct $P_{c_z}$.

So suppose that the bodies of $P_{o_{i,z}}$ and $P_{o_{i',z'}}$ are one the same line $\ell$ (without loss of generality suppose that $\ell$ is horizontal and $P_{o_{i,z}}$ is left of $P_{o_{i',z'}}$).

We observe that both $P_a$ and $P_b$ are $\RU$-shapes, and $\ell$ is intersected by  $P_a$ and $P_b$ (at least) four times,  in the ordering $a, b, a, b$ or $a, b, b, a$ (or symmetric, but the pattern $a,a,b,b,$ is forbidden, as $o_{i,z}$ and $o_{j,z}$ are non-adjacent). This follows from the fact that, by Claim \ref{clm-clean}, one leg of $P_{o_{i,z}}$ ($P_{o_{i',z'}}$, resp.) shares a grid edge with some segment of each $P_a$ and $P_b$, and wastes one bend point for each.

The only (topological) possibilities of representing $P_a$ and $P_b$ with $\RU$-shapes are depicted in Fig.~\ref{fig:abab-abba}. Since false (true) variables and their occurences are represented by \RC-shapes (\RU-shapes, resp.), the appropriate paths have to pass between $P_a$ and $P_b$. The red zones in Fig. \ref{fig:abab-abba} depict the only place where representatives of true variables can be placed, while blue zones show where false variable representatives can occur. Note that the blue zone can be extended downwards, but not upwards. The blue and red zones are disjoint (up to one segment that cannot be used by clause representatives). Thus horizontal and vertical representatives of variable-occurences cannot intersect.
\begin{figure}
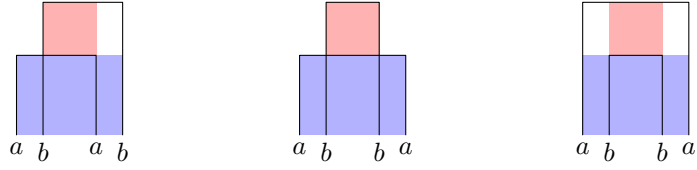

~\hfill\epsfbox{bendy.11}\hfill\epsfbox{bendy.12}\hfill\epsfbox{bendy.13}\hfill~
\caption{The possible positions of $P_a$ and $P_b$.}
\label{fig:abab-abba}
\end{figure}
Consider a clause $z'' = (i'',j'',k'')$, where at least one of the variables is represented by a $\RU$-shape and at least one is represented by a $\RC$-shape and no pair of occurence representatives intersect each other. It is easy to observe that we cannot represent $c_{z''}$ by a \RU-shape or by a \RC-shape, so that it is adjacent to all three $P_{o_{i'',z''}},P_{o_{j'',z''}},P_{o_{j'',z''}}$ in the desired way, i.e., avoiding creating adjacency to $a$ and $b$.

So the representatives or variables (and occurences) are either all horizontal or all vertical. If they are vertical, then they all must be parallel (see Fig. \ref{fig:abab-abba}), and again we cannot represent any clause vertex.

So, finally, we assume that all variable (and occurence) vertices are represented by $\RU$-shapes. Suppose there are two clauses $z = (i,j,k)$ and $z'=(i,j',k')$, such that  (i) the bodies of $P_{o_{i,z}}$ and $P_{o_{j,z}}$ are collinear and the body of $P_{o_{i,z}}$ is left of  the body of $P_{o_{j,z}}$, (ii) the bodies of $P_{o_{i,z'}}$ and $P_{o_{j',z'}}$ are collinear and the body of $P_{o_{i',z'}}$ is right of  the body of $P_{o_{j',z'}}$. Using the limited possibilities of representing variable vertices (see Fig. \ref{fig:abab-abba}), by the case analysis we observe that such a situation is impossible. 

Consider a clause $z = (i,j,k)$. Observe that always two of $P_{o_{i,z}}$ and $P_{o_{j,z}}$ have to be collinear (to make it possible to intersect them together with yet one horizontal curve by a vertical curve).
Thus we can define the following truth assignment. For each clause $z=(i,j,k)$, if the bodies of $P_{o_{i,z}}$ and $P_{o_{j,z}}$ are collinear and the body of $P_{o_{i,z}}$ is left to the body of $P_{o_{j,z}}$, then $i$ is assigned true and $j$ is assigned false (note that no variable is set both true and false). The unassigned variables can get arbitrary values. Since each clause contains at least one true and least one false variable, our truth assignment satisfies $\varphi$, which contradicts our assumption.

\noindent \textbf{Case 2. $X = \{\RU, \Z\}$.}

Suppose we have an unsatisfiable formula $\varphi$ and a  $\{\RU, \Z\}$-representation of $G(\varphi')$. If no pair of occurences has collinear bodies, we have shown that $G(\varphi')$ cannot be represented. Also note that no pair of vertical segments representing variables and occurences can be represented on the same vertical line, since then it is impossible to construct $P_a$ and $P_b$.

Thus suppose we have two occurence-vertices, represented by $\RU$-shapes, whose bodies are collinear. But then neither $a$ nor $b$ can be represented by a $\Z$-shape, so both of them have to be represented by a $\RU$-shape and we perform exactly the same as  previously (because of exactly the same argument, we can use no $\Z$-shape to represent a variable-vertex or an occurence-vertex, and the truth assignment of $\RU$-shapes is exactly the same as in the previous case).

\noindent \textbf{Case 3. $X=\{\RZ, \N\}$ or $X = \{\RZ, \RN\}$}

Each curve permitted in this case is monotonic with respect to some pair of directions. Thus none of them crosses any (vertical or horizontal) line more than once. So the bodies of no variable or occurence representants can be collinear. Thus we cannot represent an unsatisfied clause. \end{proof}

Note that the lemma above shows that, both, an intersection and a union of the mentioned subclasses (as well as anything sandwiched between them) is NP-hard to get recognized. Also, note that it does not show that all classes representable by a given subset of 2-bend shapes (which includes at least one vertical and at least one horizontal shape) are NP-complete to get recognized. It still may happen that there exists such a set $X$ of patterns, that $X$-graphs can be polynomially recognized. However, we know that if such a class exists, it must not contain even the intersection of  $\{\RZ, \N\}$-graphs and $\{\RZ, \RN\}$-graphs.

Finally, let us try to explore the limits of the original hardness reduction for 2-bend graphs (Theorem \ref{thm:2linebend}). We know that it works for 2-bend graphs, for $\{\RZ, \N\}$-graphs, and for $\{\RZ, \RN\}$-graphs (and where the inclusion-relation applies, then also for everything in between). However, we may show that the reduction works also for all triples of 2-bend shapes, in which at least one shape is vertical, at least one is horizontal, and they are not symmetric to the triple $\{\RC, \Z, \U\}$, i.e., without loss of generality, two vertical shapes, one having its legs in the same direction, the other having legs in mutually opposite directions, and the legs of the horizontal one go in the same direction and yet in the direction ``towards the common angle'' of the other two gadgets. It is easy to observe that the ``simplified'' construction can be represented, so we need to show, for a particular satisfied clause $z = (i,j,k)$, how to represent vertices $d_z, e_z$, and $f_z$. Suppose without loss of generality $i,j$ are evaluated true and $k$ is evaluated false. The path $P_{c_z}$ passes through the intersection point of $P_{o_{i,z}}$ and $P_{o_{k,z}}$, and through the intersection point of $P_{o_{j,z}}$ and $P_{o_{k,z}}$. In order to represent $d_z$ (adjacent to $o_{i,z}$ and $o_{j,z}$) we need to use the same intersection-point, i.e., we need the angle obtained from $c_z$ rotated by 180 degrees.

\begin{lemma}
It is NP-complete to recognize $X$-graphs, where $X$ is
any triple of 2-bend shapes containing at least one vertical and one horizontal shape, and is not symmetric to $\{\RC, \Z, \U\}$.
\end{lemma}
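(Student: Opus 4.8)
The plan is to reuse, without any change at the graph level, the full reduction of Theorem~\ref{thm:2linebend}: from a \pn3sat formula $\varphi$ I build exactly the same graph $G$ (with the clause vertices $c_z,d_z,e_z,f_z$ and the $21$-fold replication $\varphi'$). This makes one direction free. If $G$ is an $X$-graph then, since every shape of $X$ is a $2$-bend shape, $G$ is a $2$-bend graph, and Theorem~\ref{thm:2linebend} already yields that $\varphi$ is satisfiable. Thus the entire content of the lemma is the converse: for a satisfiable $\varphi$ I must exhibit a representation of $G$ using only the three shapes of $X$.

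For the construction I would start from the canonical drawing of Lemma~\ref{lem:rep}. Fixing a satisfying assignment, I build the frame from $P_a$ and $P_b$, represent every true variable and its occurrences by the vertical shape of $X$ and every false one by the horizontal shape, and keep them synchronized exactly as in Claims~\ref{clm-clean} and~\ref{clm-synchro}. Each $P_{c_z}$ is drawn as in Lemma~\ref{lem:rep}, through the two crossing points of the three occurrence bodies. Since this simplified part is representable for every admissible triple, the only new task is to add the three extra clause vertices $d_z,e_z,f_z$ for each satisfied clause $z=(i,j,k)$.

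I would then split on the truth pattern of $z$; without loss of generality two of $i,j,k$ are true (vertical bodies) and one is false (horizontal body), the mirror case being symmetric. The vertices $e_z$ and $f_z$ each touch one vertical and one horizontal body and can be realized as a ``half'' of $P_{c_z}$, so they are easy. The delicate vertex is $d_z$, adjacent to the two parallel, non-collinear vertical bodies and to nothing else: as in the proof of Lemma~\ref{lem:pairs}, such a path must be horizontal (its two legs are the only segments able to meet the two bodies), hence is forced to reuse the same horizontal shape already carrying $P_{c_z}$. I would place $P_{d_z}$ as a second copy of that shape approaching the two vertical bodies \emph{from the opposite side}, i.e.\ via the angle obtained by rotating the relevant corner of $P_{c_z}$ by $180^{\circ}$; this keeps $P_{d_z}$ disjoint from $P_{c_z}$ and away from the third occurrence and from $P_a,P_b$. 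The point is that this opposite-side copy must itself be an allowed shape: if the horizontal shape has antiparallel legs (\N\ or \RN) the $180^{\circ}$ rotation returns the same shape and we are done, while the symmetric argument handles the two-false pattern of $f_z$ using a centrally symmetric vertical shape.

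The main obstacle is the final geometric case analysis, organized by the leg-direction type (parallel versus antiparallel legs) of the three shapes of $X$ and by the relative left/right position of the two collinear bodies. I expect to show that the rotated copy of the forced shape is routable in every admissible triple except exactly one symmetry class: when the single horizontal shape has parallel legs pointing toward the shared corner of the two vertical shapes, and those verticals are one parallel-leg and one antiparallel-leg shape, the $180^{\circ}$-rotated horizontal shape needed for $P_{d_z}$ is unavailable and no other routing simultaneously avoids $P_{c_z}$ and the third occurrence. This is precisely the class symmetric to $\{\RC,\Z,\U\}$; for every other triple the remaining steps of Theorem~\ref{thm:2linebend}, including the quantitative trick and the unsatisfiable-clause argument, carry over verbatim.
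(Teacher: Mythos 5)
Your skeleton coincides with the paper's: reuse the graph of Theorem~\ref{thm:2linebend}, get the direction ``representable $\Rightarrow$ satisfiable'' for free because every $X$-graph is a 2-bend graph, and reduce the converse to placing $d_z,e_z,f_z$ around the canonical drawing using $180^{\circ}$-opposite angles at intersection points. But you have the bottleneck inverted, and this derails the case analysis you defer. The easy clause-vertex is the one adjacent to the two parallel (say vertical) occurrence bodies: any horizontal shape of $X$, translated to a fresh height, puts its two legs on the long exposed vertical bodies away from $P_{c_z}$, and no rotated copy is needed. The forced vertices are the two \emph{mixed} ones, each adjacent to one vertical and one horizontal body: such a path must contain the unique intersection point of those two bodies as one of its corners, and $P_{c_z}$ already has a corner there, so non-adjacency of the clause vertices forces the new corner to be the $180^{\circ}$ rotation of $P_{c_z}$'s corner. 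Your plan to realize these mixed vertices as ``a half of $P_{c_z}$'' is exactly what is forbidden --- it would create an adjacency to $c_z$, contradicting the pairwise non-adjacency of $c_z,d_z,e_z,f_z$.

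A second, and decisive, gap: the opposite corner need not be supplied by the $180^{\circ}$-rotated \emph{horizontal} shape; it may come from any shape of $X$, in particular a vertical one. This is why, e.g., $\{\RC,\C,\U\}$ is representable even though \RU\ is absent: the opposites of the two corners of a \U-shape are the upper corners of a \RC-shape and of a \C-shape. Your stated criterion (``the rotated horizontal shape for $P_{d_z}$ is unavailable'') would wrongly reject that triple. The correct test, which the paper establishes after classifying the triples up to symmetry (nine classes; one excluded; four already containing $\{\RZ,\N\}$ or $\{\RZ,\RN\}$ and thus handled by sandwiching; four checked by hand), is that $X$ contains one horizontal and one vertical shape each of whose two corners has its opposite somewhere in $X$ --- the two-true and two-false clause patterns are genuinely distinct cases here, not mirror images, since $X$ need not be symmetric under exchanging horizontal and vertical. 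Only $\{\RC,\Z,\U\}$ fails (one corner of \U\ has no opposite among \RC, \Z, \U). As written, your proposal leaves both the symmetry classification and the eight-case verification as ``I expect to show'', and the verification it sets up would not come out right.
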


\begin{proof}
First let us characterize these triples $X$. To filter out symmetries, we start analyzing the shapes. We know that, without loss of generality, two shapes are vertical and one horizontal. There are three possibilities for the choice of vertical shapes: either each vertical shape has both its legs in the same direction, or one has legs in the same direction while the other not, or both of them have legs in different directions. This gives us 3 possibilities (up to symmetries) of vertical pairs: $\{\RC, \C\}$, $\{\RC, \Z\}$, and $\{\Z, \RZ\}$ (note that $\{\RC, \RZ\}$ is symmetric to $\{\RC, \Z\}$).
For each of them we try all four possible horizontal segments and, again, we obtain some symmetries: $\{\RC, \C, \U\}$ is symmetric to $\{\RC,\C, \RU\}$, $\{\RC, \C, \N\}$ is symmetric to $\{\RC, \C, \RN\}$, and $\{\Z, \RZ, \U\}$ is symmetric to $\{\Z, \RZ, \RU\}$ (always vertical flip).

One of the cases is the excluded one ($\{\RC, \Z, \U\}$), so we have to analyze eight cases. First four of them are simple: $\{\RC, \Z, \N\}$, $\{\RC, \Z, \RN\}$, $\{\Z, \RZ, \N\}$, and $\{\Z, \RZ, \RN\}$ contain either $\{\RZ,\RN\}$-graphs or $\{\RZ,\RZ\}$-graphs, so the reduction works.

The remaining cases are $\{\RC, \C,\U\}$, $\{\RC, \C, \N\}$, $\{\RC, \Z, \RU\}$, and $\{\Z, \RZ, \U\}$, where we have to employ a kind of brute force.

Consider a clause $z = (i,j,k)$. We need to show how to represent the vertices $c_z, d_z, e_z$, and $f_i$, separately for the case when two variables from $z$ are true (and one false), and when two variables from $z$ are false (and one true).

In the canonical representation, occurence-representatives of true variables intersect all occurence-representatives of false variables (in just one point). To represent (without loss of generality) $c_z, d_z$, and $e_z$, we have to use these these intersection points simultaneously by $c_z, d_z$ and $c_z, e_z$ (one with the former pair, one with the latter). So for $d_z$ and $e_z$ we need a shape with an angle ``opposite'' to the shape used for representing $c_z$. So it suffices to show that there exist one vertical and one horizontal segment, such that both its angles have their ``opponents''.

\noindent  {\bf Case 1: $X=\{\RC, \C,\U\}$.}

Suppose first that two variables from $z$ (say $i$ and $k$) are true and one ($j$) is false. Thus the bodies of $P_{o_{i,z}}$ and $P_{o_{k,z}}$ will be vertical, while the body of $P_{o_{j,z}}$ will be horizontal, which implies that $c_z$ can be represented with a \U-shape. Its left angle is opposite to the upper angle of a \RC-shape, its right angle is opposite to the upper angle of a \C-shape. Therefore these shapes can be used for representing (without loss of generality) $d_z$ and $e_z$. A \U-shape can be also used for representing $f_z$.

If two variables are false, we may use a \RC-shape, and as the opposite angles we use the angles of a \C-shape. This completes the analysis for this triple.

\noindent  {\bf Case 2: $X=\{\RC, \C,\N\}$.}

The clause with two true variables is the same as in the previous case (we use a \RC-shape and a \C-shape). The clause with two false variables has to be represented by a \N-shape. Its left angle is opposite to the lower angle of a \RC-shape, and the right angle is opposite to the upper angle of a \C-shape.

\noindent  {\bf Case 3: $X=\{\RC, \Z, \RU\}$.}

Lower angles of a \RC-shape and a \Z-shape are the opposite angles for a \RU-shape. To represent vertical segments we pick a \Z-shape. Note that its angles are mutually opposite (one to another).

\noindent  {\bf Case 4: $X=\{\Z, \RZ, \U\}$.}

This triple is also simple: \Z-shapes and \RZ-shapes give us all possible angles.

\noindent  {\bf Case 5: $X=\{\RC, \Z, \U\}$ (excluded triple).}

Finally, let us remark that in this triple, we cannot find an angle opposite to one angle of a \U-shape (because both upper angles of a \RC-shape and a \Z-shape are going in the same direction). Thus this triple is exceptional and our reduction does not work for it.
\end{proof}

As a corollary of the last lemma, the reduction works for all such 4-tuples of 2-bend shapes, where at least one shape is vertical and at least one horizontal (a non-trivial situation arises only when extending $\{\RC, \Z, \U\}$). Note also that the reduction works for any $k$-tuple of 2-bend shapes for $k \geq 5$ (as there are just 4 vertical and 4 horizontal shapes, we are sure that at least one will be horizontal and at least one will be vertical).

Summing up the results from this section, we obtain the following.
\begin{theorem} \label{thm:classes and subclasses}
It is NP-complete to recognize $X$-graphs, where $X$ is:
\begin{enumerate}[(i)]
\item any of $\{\RU, \U, \N, \RN\} \times \{\C,\RC,\Z,\RZ\}$,
\item any triple of 2-bend shapes containing at least one vertical and one horizontal shape, and is not symmetric to $\{\RC, \Z, \U\}$,
\item any 4-tuple of 2-bend shapes, containing at least one horizontal and one vertical shape,
\item any $k$-tuple of 2-bend shapes for $k \geq 5$.
\end{enumerate}

Moreover, it is impossible to sandwich any polynomially recognizable class between:
\begin{enumerate}[(a)]
\item the intersection of $\{\RU, \U, \N, \RN\} \times \{\C,\RC,\Z,\RZ\}$ and their union,
\item intersection of classes given in (ii) and the class of 2-bend graphs.
\end{enumerate}
\end{theorem}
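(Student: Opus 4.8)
The plan is to prove the statement by assembling the reductions already established in this section, the only genuinely new ingredient being a monotonicity observation that propagates hardness to supersets of allowed shapes. The linchpin concerns the \emph{full} construction $G$ of Theorem \ref{thm:2linebend}. For \emph{every} set $X$ of 2-bend shapes the unsatisfiable direction is automatic: if $\varphi$ is unsatisfiable then $G$ is not a 2-bend graph at all (Theorem \ref{thm:2linebend}), hence not an $X$-graph. Therefore, with the full construction, the reduction succeeds for $X$ precisely when each satisfiable $\varphi$ yields an $X$-representable $G$; and since enlarging $X$ can only enlarge the class of $X$-graphs, this representability is monotone, so success for $X'$ entails success for every $X\supseteq X'$. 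I would state this observation first, since it drives parts (iii) and (iv).

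With this observation in place, parts (i) and (ii) are verbatim restatements of Lemma \ref{lem:pairs} and of the preceding lemma on triples. For (iii) I would note that every $4$-tuple $X$ containing at least one horizontal and one vertical shape contains a sub-triple with one horizontal and one vertical shape; if one such sub-triple is not symmetric to $\{\RC,\Z,\U\}$, then hardness follows from (ii) and monotonicity. Otherwise $X$ is, up to the symmetries classified just before the triple lemma, an extension of $\{\RC,\Z,\U\}$, and I would invoke the direct argument already indicated: for a satisfied clause $z=(i,j,k)$ the vertices $d_z,e_z,f_z$ must reuse the intersection points used by $P_{c_z}$ through corners rotated by $180^\circ$, and the fourth shape supplies exactly the corner opposite to the \U-corner that is absent from the excluded triple. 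Part (iv) is then immediate: as there are only four horizontal and four vertical shapes, any $k$-tuple with $k\ge 5$ contains at least one shape of each type; pruning it to a $4$-subset that retains one horizontal and one vertical shape gives a hard $4$-tuple by (iii), and monotonicity lifts hardness back to the whole $k$-tuple.

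For the sandwiching statements I would keep the two constructions apart. Statement (b) uses the full construction $G$: for satisfiable $\varphi$ the representability direction in the proof of the triple lemma places the single graph $G$ in \emph{every} triple-class of (ii), hence in their intersection, whereas for unsatisfiable $\varphi$ Theorem \ref{thm:2linebend} places $G$ outside the class of 2-bend graphs; any class sandwiched between the intersection of the triple-classes and the 2-bend graphs would therefore separate satisfiable from unsatisfiable formulas, so it cannot be polynomially recognizable unless P$=$NP. Statement (a) instead relies on the simplified construction $G(\varphi')$, for which unsatisfiability does \emph{not} yield non-representability by arbitrary 2-bend paths, so monotonicity is unavailable and the large class must be the \emph{union} rather than all 2-bend graphs. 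Here, for satisfiable $\varphi$, Lemma \ref{lem:rep} together with the symmetry classification shows $G(\varphi')$ representable by each of the four representative pairs and hence, after applying the symmetries, by all sixteen pairs of $\{\RU,\U,\N,\RN\}\times\{\C,\RC,\Z,\RZ\}$, placing it in their intersection; for unsatisfiable $\varphi$, Lemma \ref{lem:pairs} shows $G(\varphi')$ representable by none of the pairs, placing it outside their union. Any sandwiched class again separates the two cases.

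The one genuinely non-routine point is the extension case of (iii): checking that for every $4$-tuple which is (up to symmetry) an extension of $\{\RC,\Z,\U\}$ the added shape really does provide the corner opposite to the \U-corner. I expect to dispatch this by a short explicit enumeration of the corner pairs of the candidate fourth shapes, matched against the two corners of \U, rather than by geometric pictures; all the other parts of the theorem are assembly of facts established earlier in the section.
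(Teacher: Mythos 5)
Your proposal follows essentially the same route as the paper: the theorem is an assembly of Lemma \ref{lem:pairs}, the triple lemma, and a monotonicity observation (for the full construction, unsatisfiability kills representability by \emph{all} 2-bend paths, so adding shapes to $X$ can only help the satisfiable direction), plus the two sandwiching remarks, with statement (a) correctly tied to the simplified construction $G(\varphi')$ (intersection vs.\ union of the sixteen pair-classes) and statement (b) to the full construction (intersection of the triple-classes vs.\ all 2-bend graphs). This matches the paper's ``summing up'' logic, including the observation that a $k$-tuple with $k\ge 5$ must contain a horizontal and a vertical shape.

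One concrete step in your plan for (iii) would fail as stated: the claim that for every 4-tuple extending $\{\RC,\Z,\U\}$ ``the fourth shape supplies exactly the corner opposite to the \U-corner that is absent from the excluded triple.'' The missing corner in $\{\RC,\Z,\U\}$ is the one opening down-and-right (opposite to the up-left corner of \U), and the shape \RN\ only offers corners opening up-right and down-left, so the 4-tuple $\{\RC,\Z,\U,\RN\}$ is a counterexample to your enumeration claim. This does not sink the proof, because that 4-tuple is already caught by your first branch: it contains the sub-triple $\{\RC,\Z,\RN\}\supseteq\{\Z,\RN\}$, and $\{\Z,\RN\}$ is a vertical flip of the monotonic pair $\{\RZ,\N\}$, for which the full reduction works by Corollary \ref{cor:mono-zig-zag}. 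In fact every 4-tuple with at least one horizontal and one vertical shape contains a mixed sub-triple not symmetric to $\{\RC,\Z,\U\}$, so your ``otherwise'' branch is vacuous; you should either prove that (a short check over the five possible fourth shapes) or at least be prepared to fall back on the sub-triple argument when the corner enumeration fails, rather than asserting that the fourth shape always supplies the missing opposite corner.
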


\section{More slopes}  \label{sec:unaligned}
In this section we relax the definition of an EPG-representation. By an {\em unaligned EPG-representation} of a graph $G$ we mean a mapping from vertices of $G$ to a set of polylines (piecewise linear curves), such that two vertices are adjacent if and only if their corresponding polylines share infinitely many points. Again, we are interested in keeping the number of bends (or equivalently, segments in a polyline) small.

Here we show hardness of the recognition of unaligned 2-bend graphs and conclude the section with discussion of a trade-off between the number of slopes used and the number  of bends.
\begin{theorem}
It is NP-hard to recognize unaligned 2-bend graphs.
\label{thm:unalignedrecog}
\end{theorem}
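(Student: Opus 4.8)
The plan is to reduce from \pn3sat, reusing almost verbatim the gadget $G(\varphi')$ built in the proof of Theorem~\ref{thm:2linebend} — the frame-vertices $a,b$, the variable-vertices $v_i$, the occurrence-vertices $o_{i,z}$, and the clause-vertices $c_z,d_z,e_z,f_z$ — and then adding one new gadget whose sole role is to make the argument survive the passage to arbitrary slopes. Since every axis-aligned EPG-representation is in particular an unaligned one, the ``satisfiable $\Rightarrow$ representable'' direction is inherited for free from the canonical representation of Figure~\ref{aa:global} (right); all the work lies in the converse.

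The first thing I would observe is that two of the three geometric facts underlying the aligned proof are not really about the axes at all. A $2$-bend polyline has three segments, and two consecutive segments cannot be parallel (otherwise they would be collinear and the path would have fewer bends); hence at most its first and third segments are parallel, lying on at most two distinct parallel lines. Consequently a $2$-bend polyline still cannot be adjacent to three polylines whose shared portions lie on three pairwise distinct parallel lines — this is exactly Figure~\ref{aa:claumean} with ``horizontal/vertical'' replaced by ``same slope'', and its proof is slope-free. Likewise, the collinearity-blocking fact of Figure~\ref{aa:dvespojky} — two collinear overlapping segments cannot both be shared with two mutually non-adjacent $2$-bend paths, so the vertices $d_z,e_z,f_z$ forbid two occurrence-bodies of one clause from lying on a common line — is a statement about a single fixed line and carries over unchanged. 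Thus, just as before, a clause $z=(i,j,k)$ is representable if and only if the three occurrence-bodies are neither mutually collinear nor all of one common slope.

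Next I would re-establish the cleanness and synchronization properties (Claims~\ref{clm-clean} and~\ref{clm-synchro}) in the unaligned setting via the quantitative trick, so that each $P_{o_{i,z}}$ still inherits the slope of $P_{v_i}$ and exposes only its body. The genuine obstacle — and the reason the aligned proof does not transfer directly — is that with arbitrary slopes the condition ``all three bodies share one slope'' is no longer equivalent to a two-valued assignment: a priori each variable may choose any of infinitely many slopes, and then three differently-sloped occurrence-bodies would make \emph{every} clause representable, collapsing the reduction. The heart of the construction must therefore be a rigidity gadget forcing a \emph{global binary} choice: every $P_{v_i}$ (and hence, by synchronization, every $P_{o_{i,z}}$) is constrained to one of exactly two fixed slopes $\alpha$ (read as \emph{true}) and $\beta$ (read as \emph{false}). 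I would build this from a rigid ``two-direction frame'' attached to $a$ and $b$ whose admissible realizations pin down two directions up to an affine map, so that any body threaded through it with its legs covered is forced onto slope $\alpha$ or slope $\beta$. The crucial lemma is that this gadget admits \emph{exactly} these two states per variable; proving that rigidity — ruling out all the extra freedom that non-axis-parallel segments appear to offer — is where essentially all the difficulty lies. It is also where many distinct slopes are spent, which is what ultimately yields the $\Omega(\sqrt{n})$ lower bound mentioned in the introduction, making that bound a corollary of the same construction.

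With the gadget in place the argument closes as in the aligned case: an unsatisfied clause forces its three (now two-slope) occurrence-bodies to be all-$\alpha$ or all-$\beta$, hence parallel and, by the $d_z,e_z,f_z$-blocking, pairwise non-collinear, so $P_{c_z}$ cannot be realized; conversely a satisfying assignment gives a representation modelled on the canonical one. This establishes NP-hardness. I would claim only hardness and not membership, since an unaligned $2$-bend representation may require coordinates of high (possibly irrational) complexity, so it is unclear that a polynomial-size certificate exists.
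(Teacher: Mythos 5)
There is a genuine gap: the entire proof hinges on a ``rigidity gadget'' that forces every variable path onto one of exactly two slopes, and you never construct it --- you explicitly defer it (``proving that rigidity \ldots is where essentially all the difficulty lies''). Your diagnosis of the obstacle is exactly right: in the unaligned setting nothing prevents each variable from choosing its own slope, whereupon the three occurrence-bodies of an all-true clause need not be parallel and $P_{c_z}$ becomes representable, killing the reduction from \pn3sat. But identifying the obstacle is not the same as overcoming it, and it is far from clear that a family of 2-bend polylines can pin the admissible directions of all variable paths down to a two-element set: the covering arguments behind Claims~\ref{clm-clean} and~\ref{clm-synchro} only force legs to lie \emph{on} segments of service paths, which constrains positions of legs, not the direction of the exposed body; a body attached to a covered leg can still leave at essentially any angle. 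Without that lemma the proposal is a plan, not a proof.

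The paper avoids the problem rather than solving it, by switching the source problem to {\sc 3-Coloring}. A single service path $P_a$ has three pairwise non-collinear segments, which serve directly as the three color classes: each vertex $v$ of the graph to be colored contributes a path $P_{v_2}$ with one leg forced (via the quantitative trick and a chain of auxiliary service vertices) to lie on one of these segments, and each edge $uv$ contributes \emph{two} mutually non-adjacent paths $e_1,e_2$ adjacent to both $P_{u_2}$ and $P_{v_2}$. The only geometric fact needed is the slope-free half of your Figure~\ref{aa:dvespojky} observation: two collinear shared portions cannot serve two mutually non-adjacent 2-bend paths, so same-colored endpoints of an edge make one of $e_1,e_2$ irrepresentable, while different segments of $P_a$ (being non-collinear) admit both. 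No slope is ever forced anywhere, which is precisely why the argument survives arbitrary directions. Your closing remark about claiming only NP-hardness, not membership, does agree with the paper.
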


\begin{proof}
This time we reduce from {\sc 3-Coloring}.
For a graph $G$ we shall construct a graph $H$, which is an unaligned 2-bend graph if and only if $G$ is 3-colorable.

The reduction uses ideas similar to the reduction for aligned 2-bend
graphs. This time we use 12 service vertices and again we want our gadgets
to avoid being represented over the ends of segments of these service
vertices, and over their mutual intersection points. So we use the ``quantitative trick'' again. This time we may have no more than 1 260
special places ($12 \cdot 2 \cdot 3$ ends of segments, $\binom{12}{2} \cdot 9$ possible intersection points, each of which can be used at most twice). Thus we take 1 261 disjoint copies of the graph $G$, obtaining the graph $G'$, which is 3-colorable if and only if $G$ is 3-colorable.

The main idea of the reduction is that one service vertex of $H$, named $a$,
simulates the 3-coloring of $G'$. The individual segments of $P_a$
correspond to three color classes.
Each vertex $v$ of $G'$ will be represented by several vertices of $H$. One of them, called $v_2$, will have the property that one of the legs of $P_{v_2}$ lies on a segment of $P_a$ (thus defining the color of $v$ in a 3-coloring of $G'$), and the remaining two segments of $P_{v_2}$ will be fully covered by some other paths, non-adjacent to edge-representatives.
An edge $uv$ of $G'$ will be represented by a pair of mutually non-adjacent vertices of $H$. Both of them will be made adjacent to $a$ and the representatives of both $u$ and $v$. The main idea is that we cannot construct edge-representatives, if $v_2$ and $u_2$ are adjacent to the same segment of $a$ (and thus $v$ and $u$ get the same color). 
This part of $H$ is illustrated in Fig.~\ref{gd:idea}.

\begin{figure}
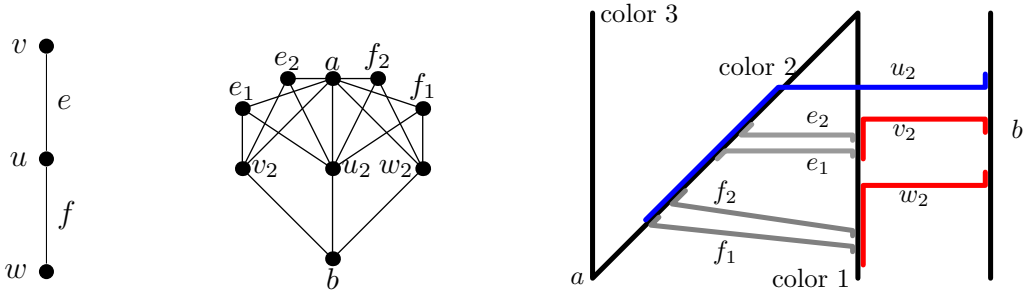

\begin{tikzpicture}[yscale = 1.5]
\node[circle, fill=black,label={left:$w$},inner sep=0pt,minimum size=0.2cm] (w) at (0,0) {};
\node[circle, fill=black,label={left:$u$},inner sep=0pt,minimum size=0.2cm] (u) at (0,1) {};
\node[circle, fill=black,label={left:$v$},inner sep=0pt,minimum size=0.2cm] (v) at (0,2) {};
\draw (w) -- (u) node[midway, right] {$f$};
\draw (u) -- (v) node[midway, right] {$e$};
\end{tikzpicture}
\hfill
\epsfxsize=2.8cm \epsfbox{bendy.16}\hfill\epsfbox{bendy.7}
\caption{\textbf{Left:} The graph $G$. \textbf{Middle:} The main part of $H$. For clarity, just the main vertex-representants are depicted. Also the replication (``quantitative trick'') was not performed.
\textbf{Right:} An unaligned 2-bend representation of $H$. Note that having fixed representations of $v_2$ and $u_2$,  we are unable to represent the edge $vw$ in such a way that it representatives are adjacent to $v_2$ and $u_2$ only on their legs lying on $a$.}
\label{gd:idea}
\end{figure}

Formally, the graph $H$ has 12 service vertices $a_0, a_{0.5}, a_1, a_{1.5}, a_2, a_{2.5}, a_3, a_{3.5}$, $a, b, a_B$, and $ b_B$.
For each vertex $v$ of $G'$, we add to $H$ vertices $v_1, v_{1.5}, v_2, v_{2.5}, v_3$, and $v_b$ (we will call them $v$-vertices).
The vertex $v_b$ is adjacent to all other $v$-vertices. Furthermore, $v_{1.5}$
is adjacent to $v_1, v_2$, and $v_{2.5}$ is adjacent to $v_2, v_3$.
Finally, each $v$-vertex is adjacent to two service vertices:
$v_1$ to $a_0,a_1$, $v_{1.5}$ to $a_{0.5},a_{1.5}$,
$v_2$ to $a,b$, $v_{2.5}$ to $a_{2.5},a_{3.5}$, $v_3$ to $a_2,a_3$.
For each edge $e=uv$ we add a pair of mutually non-adjacent vertices $e_1, e_2$, both
adjacent to $a$, $u_2$, and $v_2$.

Suppose we have an unaligned 2-bend representation of $H$. First, by the ``quantitative trick'', we know that at least for one copy of $G$, for any vertex $v$, all vertices $v_i$ ($i\in\{1, 1.5, 2, 2.5, 3, b\}$) are represented by 2-bend paths having both legs covered by the segments of the appropriate pair of service vertices. Let us focus on this copy of $G$.

We observe that the body of $P_{v_2}$ (for any $v$)
is covered by (at least) $P_{v_b}$. This follows from
the fact that $P_{v_b}$ can intersect the other $v$-vertices
only by its body (as one leg lies on $P_{a_B}$, and the second on $P_{b_B}$).
Thus the bodies of $P_{v_1}, P_{v_{1.5}}, \ldots, P_{v_3}, P_{v_b}$
 must form an interval representation of $H[\{v_1, v_{1.5}, \ldots, v_3, v_b\}]$ and in no such representation the body of $P_{v_2}$
can exceed the body of $P_{v_b}$. Therefore the body
of $P_{v_2}$ is fully covered by (at least) the body of
$P_{v_b}$.

Now, we are in a desired situation. Consider an edge $e=uv$. For each $P_{u_2}$ and $P_{v_2}$, only the leg lying on $P_a$, can be made adjacent to both $P_{e_1}$ and $P_{e_2}$, as using any other segment would cause some unwanted adjacency.
If these legs are on distinct segments of $P_a$, obviously
we can represent both $e_1$ and $e_2$. Conversely, if they are on
the same segment of $P_a$, we can represent at most one of them (similarly to Fig.~\ref{aa:dvespojky} (left)). This shows irrepresentability for a non-3-colorable $G$.

On the other hand, if $G$ has a 3-coloring, we use it for distributing
segments of $P_{v_2}$ of each vertex $v$ over the segments of $P_a$.
Note that we may create a representation,
where the bodies of $P_{v_2}$, for all $v$, are parallel.
Then other $v$-vertices may be represented in the way shown in Fig.~\ref{gd:global}. For any edge $e$, paths $P_{e_1}$ and $P_{e_2}$ connect two non-collinear segments, which can be easily done. \end{proof}

\begin{figure}
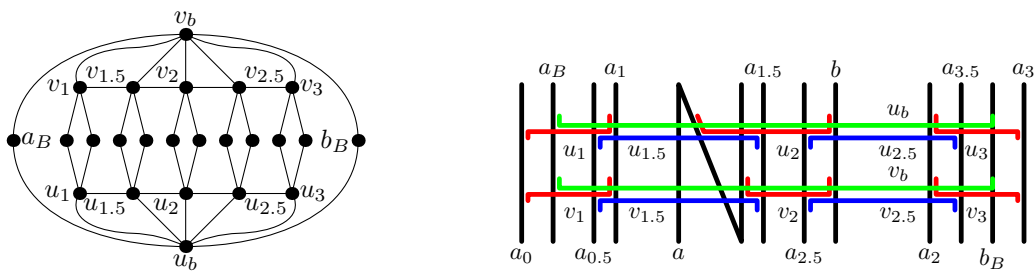

\epsfbox{bendy.17}\hfill
\epsfxsize=7cm \epsfbox{bendy.8}
\caption{
\textbf{Left:} A graph $H$ for $G$ being an edge $uv$ (replication is omitted).
Unlabeled vertices between $a_B$ and $b_B$ are, respectively: $a_0, a_1,
a_{0.5}, a_{1.5}, a, b, a_{2.5}, a_{3.5}, a_2, a_3$.
\textbf{Right:}~Unaligned 2-bend representation of $H$.
}
\label{gd:global}
\end{figure}

\subsection{Slopes and bends}
Defining unaligned bend graphs permits us to introduce a new
measure of complexity of a representation, namely, the number of slopes used.
There is an obvious trade-off between the number of bends and the number of slopes.
Before we explore this relation a little more, let us try to minimize the number of different slopes used by the unaligned 2-bend representation.

\begin{proposition} \label{lower-slopes}
In order to represent all unaligned 2-bend graphs on $n$ vertices,
we need $\Omega(\sqrt{n})$ slopes.
\end{proposition}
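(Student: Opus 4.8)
The plan is to establish the lower bound via a counting argument: I would show that the number of distinct unaligned 2-bend graphs representable with few slopes is too small to account for all graphs on $n$ vertices, forcing a representation to use many slopes. The key observation is that a 2-bend polyline is determined by a constant amount of combinatorial data relative to the slope set: once we fix a palette $S$ of $s$ allowed slopes, each path consists of three segments, and the adjacency structure between two paths is governed by which segments are collinear (share a common line) and overlap. Two paths can be adjacent only if they have a pair of segments lying on the same line, which in particular requires those segments to share a slope from $S$.

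First I would set up the encoding. Fix a representation using slopes $S = \{\sigma_1, \ldots, \sigma_s\}$. For each slope $\sigma \in S$, the set of lines in the plane parallel to $\sigma$ can be linearly ordered (by the signed perpendicular offset), and two collinear segments must lie on the same such line. The crucial combinatorial claim is that the adjacency graph is recoverable from, for each pair of paths, the information of which of their segments (at most $3 \times 3$ segment-pairs) are collinear-and-overlapping. Since adjacency requires shared infinitely many points, i.e. a one-dimensional overlap, and each path has only $3$ segments each drawn from one of $s$ slope-classes, I would argue that the entire adjacency structure is captured by assigning to the three segments of each path a slope-label from $S$ together with a consistent ``line-identity'' and interval data along each line. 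Abstracting the geometry away, the number of combinatorially distinct representations on $n$ vertices is bounded by roughly $s^{O(n)}$ times a polynomial factor from the interval-ordering data along each line (which contributes at most a factor like $(n)^{O(n)}$ but crucially does NOT depend on $s$ in a way that helps).

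The main obstacle — and the step I expect to require the most care — is pinning down the correct counting so that the exponent genuinely scales with $s$ and yields the $\Omega(\sqrt{n})$ bound rather than something weaker. The cleanest route is to count how many graphs are realizable with at most $s$ slopes and compare against $2^{\binom{n}{2}}$, the number of graphs on $n$ labeled vertices. If the number of realizable graphs is at most $s^{cn}$ for some constant $c$ (the interval structure along each fixed line being an interval graph, recognizable from a linear order, contributes only $2^{O(n \log n)}$ independently of $s$), then matching against $2^{\binom{n}{2}} = 2^{\Theta(n^2)}$ we need $s^{cn} \cdot 2^{O(n\log n)} \geq 2^{\Theta(n^2)}$, giving $\log s = \Omega(n)$, which is too strong and signals that the per-slope freedom must instead be counted as offering $O(n)$ lines per slope and $O(n)$ segments, so the real count is $(sn)^{O(n)}$; balancing $(sn)^{O(n)} \geq 2^{\Omega(n^2)}$ forces $s = 2^{\Omega(n)}$ unless the overlap data is the dominant term. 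I would therefore instead anchor the bound on the hardness construction: the proof text remarks that this proposition ``appears to be a corollary of our hardness-reduction,'' so I would extract from the reduction in Theorem~\ref{thm:unalignedrecog} an explicit family of 2-bend graphs on $n$ vertices whose representations encode $\Omega(n)$ independent edge-constraints, each demanding a geometrically distinct direction; counting the distinct directions needed to separate $\Theta(n)$ mutually constrained pairs while only $O(s^2)$ collinearity-incidences are available per slope-pair yields $s^2 = \Omega(n)$, hence $s = \Omega(\sqrt{n})$.

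Concretely, I would identify within the construction a clique-like or mutually-crossing substructure of size $\Theta(n)$ in which each pair of adjacent paths must realize its adjacency on a distinct line, and observe that a single slope $\sigma$ supports at most $O(n)$ parallel lines but any two parallel segments on distinct lines cannot be adjacent; hence the adjacencies forced within one slope-class form an interval graph, contributing limited structure. The quadratic gap between the $\binom{n}{2}$-sized adjacency freedom and the $O(s \cdot n)$ collinearity resources available forces $s \cdot n = \Omega(n^2) / \text{(per-line capacity)}$, and after accounting that each line accommodates only $O(1)$ mutually-overlapping middle segments contributing to a specific forced non-structure, the balance collapses to $s = \Omega(\sqrt{n})$. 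The delicate point throughout is ensuring that the family witnessing the bound is genuinely a valid unaligned 2-bend graph (so the lower bound is not vacuous), which is exactly what the hardness reduction supplies.
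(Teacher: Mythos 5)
Your overall instinct---to anchor the bound on the hardness construction of Theorem~\ref{thm:unalignedrecog} applied to a dense 3-colorable graph, and then to count how many edge-representative paths a single slope can serve---is exactly the route the paper takes (it uses $G\sim K_{m,m,1}$, so that $H$ has $n=\Theta(m^2)$ vertices and $m^2$ edge gadgets). However, your proposal has a genuine gap at the decisive quantitative step. The first half of your argument (counting representable graphs against $2^{\binom{n}{2}}$) is a dead end that you yourself abandon, and the replacement counting you offer is not coherent: the claims that ``only $O(s^2)$ collinearity-incidences are available per slope-pair'' and that ``$s\cdot n = \Omega(n^2)/\text{(per-line capacity)}$'' are never derived from the geometry, and the one concrete capacity bound you do state---that a single slope supports $O(n)$ parallel lines---is far too weak: with $\Theta(n)$ edge gadgets and $O(n)$ lines per slope, the balance gives only $s=\Omega(1)$.

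The missing idea is the paper's sweeping-line lemma. In the ``clean'' copy, one leg of every $P_{x_2}$ ($x\in X$) lies on one fixed segment $p$ of $P_a$ and one leg of every $P_{y_2}$ ($y\in Y$) lies on another fixed segment $r$; each edge-representative $P_{e_1}$ must have its body joining the leg of $P_{x_2}$ on $p$ to the leg of $P_{y_2}$ on $r$, and distinct edges use distinct pairs $(P_{x_2},P_{y_2})$. Sweeping a line parallel to a fixed direction $\ell$ across the plane, any body parallel to $\ell$ meets the sweep line in a full segment, so the sweep must leave at least one of the two leg-intervals it connects before it can pick up a new pair; since there are only $2m$ leg-intervals in total on $P_a$, at most $2m$ of the $m^2$ bodies can be parallel to $\ell$. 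This is the step that turns the capacity per slope into $O(\sqrt{n})$ rather than $O(n)$, and it yields $s\ge m^2/(2m)=\Theta(\sqrt{n})$ directly. Without it (or an equivalent per-slope capacity bound of order $\sqrt{n}$), your balance does not close, so the proof as proposed is incomplete.
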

\begin{proof}
The proof follows from the construction in the proof of Theorem \ref{thm:unalignedrecog}. Let $G \sim K_{m,m,1}$ be a complete bipartite graph
with biparition classes $X,Y$, both of size $m$, and one extra vertex $z$ adjacent to all other vertices.

Th graph $G$ is replicated 1261 times, obtaining $G'$, and construct $H$ in the way described in the proof of Theorem \ref{thm:unalignedrecog}. Since $G$ has $2m+1$ vertices and $\Theta(m^2)$ edges, $H$ has $n=\Theta(m^2)$ vertices.

As $G$ is 3-colorable, $H$ has an unaligned 2-bend representation. 
As always, we will focus on the ``clean'' copy of $G$. 
Consider the path $P_a$, and let $p,q,r$ denote its three segments.
By the properties of $H$, without loss of generality one leg of every $P_{x_2}$ for $x \in X$ lies on $p$, while one leg of every ${P_{y_2}}$ for $y \in Y$ lies on $r$.

Now consider the paths $P_{e_1}$ (for $e = xy$, $x \in X$, $y \in Y$).
There are $m^2$ such paths. We observe that every slope $\ell$ can be used by the bodies of at most $2m$ paths $P_{e_1}$. To see this, we use a sweeping line, parallel to $\ell$. As each path $P_{e_1}$ connects a pair of segments of a different pair $(P_{x_2},P_{y_2})$, the sweeping line must leave at least one of the segments before meeting a new one. As there are in total $2m$ segments of $P_{x_2}$ or $P_{y_2}$ on $P_a$, at most $2m$ paths $P_{e_1}$ can have their bodies parallel to $\ell$.
Thus we need at least $\left \lceil \frac{m^2}{2m} \right \rceil = \Theta(m) = \Theta(\sqrt{n})$ different slopes to represent the bodies of paths $P_{e_1}$. 
\end{proof}

To see a trade-off between the number of bends and the number of slopes, observe that for $G \sim K_{m,m,1}$, the graph $H$ can be easily represented by 3-bend paths, using only two slopes ($P_a$ is represented by a \O-shape with segments of $P_{v_2}$ on three different segments of it).
Now we can represent $P_{e_1},P_{e,2}$ (see Fig.~\ref{fig:3bend}), 
and finish the construction as in Fig.~\ref{gd:global}.

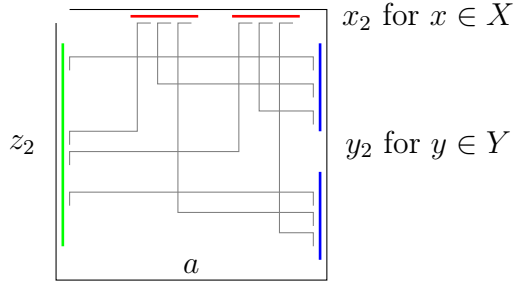
\begin{figure}
\centering
\begin{tikzpicture}[scale=0.9]
\path[draw] (0,3.8) -- (0,0) -- (4,0) -- (4,4) -- (0.2,4);
\path[draw, color=red, line width=1] (1.1,3.9) --++ (1,0);
\path[draw, color=red, line width=1] (2.6,3.9) --++ (1,0);
\path[draw, color=blue, line width=1] (3.9,3.5) --++ (0,-1.3);
\path[draw, color=blue, line width=1] (3.9,1.6) --++ (0,-1.3);
\path[draw, color=green, line width=1] (0.1,3.5) --++ (0,-3);
\path[draw, color = gray] (0.2,2) --++ (0,0.2) --++ (1,0) --++ (0,1.6) --++ (0.2,0);
\path[draw, color = gray] (0.2,1.7) --++ (0,0.2) --++ (2.5,0) --++ (0,1.9) --++ (0.2,0);
\path[draw, color = gray] (0.2,3.1) --++ (0,0.2) --++ (3.6,0) --++ (0,-0.2);
\path[draw, color = gray] (0.2,1.1) --++ (0,0.2) --++ (3.6,0) --++ (0,-0.2);
\path[draw, color = gray] (1.7,3.8) --++ (-0.2,0) --++ (0,-0.9) --++ (2.3,0) --++ (0,-0.2);
\path[draw, color = gray] (2,3.8) --++ (-0.2,0) --++ (0,-2.8) --++ (2,0) --++ (0,-0.2);
\path[draw, color = gray] (3.2,3.8) --++ (-0.2,0) --++ (0,-1.3) --++ (0.8,0) --++ (0,-0.2);
\path[draw, color = gray] (3.5,3.8) --++ (-0.2,0) --++ (0,-3.1) --++ (0.5,0) --++ (0,-0.2);
\node at (2,0.2) {$a$};
\node at (-0.5,2) {$z_2$};
\node at (5.5,2) {$y_2$ for $y \in Y$};
\node at (5.5,3.9) {$x_2$ for $x \in X$};
\end{tikzpicture}
\caption{The idea of a 3-bend representation of $P_a$ and $P_{e_1},P_{e_2}$.
}
\label{fig:3bend}
\end{figure}

\subsection{$d$-bend number}
Let us conclude the section with some generalization of the bend number. Fix a set $D$ of $d$ pairwise non-parallel lines (slopes). We say that an unaligned EPG-representation is an {\em $EPG(D)$-representation} if every segment of each polyline is parallel to some line in $D$.

The {\em $d$-bend number} $b_d(G)$ of a graph $G$ is the minimum $k$ for which there exists a set $D$ of $d$ slopes, such that $G$ has an $EPG(D)$-representation  in which every path bends at most $k$ times. We also define $b_{\infty}(G): = \min \limits _{d \in \NN} b_d(G)$, which corresponds to unaligned EPG-representations.

Observe that the 2-bend number is just the classical bend number.
It is also straightforward to observe that if $d_1 < d_2$, then $b_{d_1}(G) \geq b_{d_2}(G)$ for all graphs $G$. 
Moreover, if there exists $d \in \NN$ such that $b_d(G)=0$, then $b_{d'}(G) = 0$ for all $d' \in \NN$ (as this means that $G$ is an interval graph).

As we have seen in Proposition \ref{lower-slopes}, introducing more slopes may help us reduce the number of bends needed to represent a given graph. Here we show two more examples of this. Consider a {\em wheel graph} 
$W_n$ on $n+1$ vertices ($n \geq 3)$. It follows from the work of Golumbic {\em et al.} \cite{Izraelci} that $W_n$ is not a 1-bend graph (using 2 slopes only) and one can easily find a representation using two bends. On the other hand, for $d \geq 3$, we can represent $W_n$ using 1-bend paths (see Fig. \ref{fig:slopes} (left)). 
Thus $b_2(W_n) = 2$ and $b_d(W_n) =1$ for all $d \geq 3$.

\begin{figure}[h]
\centering
\begin{tikzpicture}[scale=0.5, yscale = 0.9]

\draw (0,0) -- (2,4) -- (4,0);
\draw (0.7,1.6) --++ (-0.9,-1.8) --++ (4,0);
\draw (2.2,-0.1) --++ (2,0) --++ (-1,2);
\draw (4.1,0.4) --++ (-2,4) --++ (-0.7,-1.4);

\draw (0.5,1.4) --++ (0.4,0.8);
\draw (0.7,2) --++ (0.4,0.8);
\draw[densely dashed] (0.9,2.6) --++ (0.4,0.8);
\end{tikzpicture}
\hskip 30pt
\begin{tikzpicture}[scale=0.5]

\draw (0,0) -- (4,0) -- (4,3);
\draw (2,-1) -- (2,2) -- (6,2);

\draw (1,0.1) -- (1.9,0.1) -- (1.9,1);
\draw (2.1,-1) -- (2.1,-0.1) -- (3,-0.1);
\draw (3,2.1) -- (3.9,2.1) -- (3.9,3);
\draw (4.1,1) -- (4.1,1.9) -- (5,1.9);

\end{tikzpicture}
\begin{tikzpicture}[scale=0.5]

\draw (0,0) -- (4,0) -- (4,2);
\draw (1,2) -- (1,-2) -- (5,2);

\draw (0,0.1) -- (0.9,0.1) -- (0.9,1);
\draw (1.1,-1) -- (1.1,-0.1) -- (2,-0.1);
\draw (3.4,0.6) -- (3.9,1.1) -- (3.9,2);
\draw (4.1,0) -- (4.1,0.9) -- (4.6,1.4);

\draw (2.3,-0.1) -- (2.8, -0.1) -- (2.4,-0.5);
\draw (3.5, 0.4) -- (3.2, 0.1) -- (3.6, 0.1);

\end{tikzpicture}
\begin{tikzpicture}[scale=0.5]

\draw (0,0) -- (5,0) -- (1,4);
\draw (2,4) -- (2,-1) -- (5,2);

\draw (1,-0.1) -- (1.9,-0.1) -- (1.9,-1);
\draw (2.1,0.5) -- (2.1,0.1) -- (2.3,0.1);
\draw (2.4,-0.1) -- (2.8, -0.1) -- (2.4,-0.5);
\draw (3.5, 0.4) -- (3.2, 0.1) -- (3.6, 0.1);
\draw (1.4,3.7) -- (1.9,3.2) -- (1.9,4);
\draw (2.1,2) -- (2.1,2.8) -- (2.6,2.3);
\draw (3.5,1.4) -- (3.9,1) -- (3.5,0.6);
\draw (4.5,1.4) -- (4.1,1) -- (4.5,0.6);
\end{tikzpicture}
\caption{\textbf{Left:} Representation of a wheel using 1-bend paths. \textbf{Right:} Representations of $K_{2,s}$ with 1-bend paths.}
\label{fig:slopes}
\end{figure}
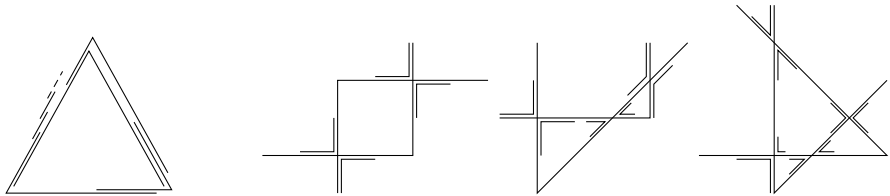

Another examples of graphs with bend number depending on the number of slopes are complete bipartite graphs. Consider e.g. a graph $K_{2,s}$. When only two slopes are available, then $K_{2,s}$ has a 1-bend representation only for $s \leq 4$. Introducing a third slope allows us to represent $K_{2,5}$ and $K_{2,6}$. Fourth slope allows representing $K_{2,7}$ and $K_{2,8}$. By analyzing the possible intersection points of two 1-bend paths, we observe that $K_{2,s}$ for any $s \geq 9$ does not have a 1-bend representation for any number of slopes. On the other hand, every $K_{2,s}$ is a 2-bend graph on two slopes (see Fig. \ref{fig:slopes} (right) and Fig. \ref{aa:dvespojky} (right)).

\section{Conclusions and open problems}  \label{sec:conclusion}
Although all non-trivial classes of EPG-graphs are considered
hard for recognition, not much is known. It is an open problem whether
the recognition problem remains NP-hard for $k$-bend graphs (for $k \geq 3$). 
Note that NP-membership is straightforward, since every $k$-bend graph can be embedded in a $O(n) \times O(n)$ grid \cite{Izraelci,AS}. Thus a polynomial certificate will be the set of $n$ at most $(k+2)$-element sequences of pairs of integers, representing the coordinates of endpoints and bend points of each segment.

\begin{problem}
Is the recognition of $k$-bend graphs NP-complete for every fixed $k \geq 1$?
\end{problem}

For unaligned bend graphs and aligned bend graphs, using more than 2 slopes, naturally arises the question on inclusions between different classes. Also the complexity of the recognition problem is unknown (for more than one bend, when we restrict the number of slopes).
Note that none of our reductions can be easily used. The unaligned version increases
the number of slopes, while in the aligned version a new slope
introduces a new  ``truth value'', but in a way that does not seem to be suitable
for a reduction from any form of coloring.



The {\sc Clique} problem in intersection graphs of geometric objects is related to Helly numbers, which give the minimum number of objects, whose pairwise intersection implies that the intersection of all objects is non-empty (see \cite{Izraelci}).
We may use this approach to restrict the number of places to inspect when finding the maximum clique.
\begin{problem}
Given $k$, what is the maximum number $c(k)$ such that every maximal clique
in $k$-bend graphs is contained in the union of $c(k)$ edge-cliques?
\end{problem} 
In particular, recall from Section \ref{sec:defs} that a clique in a 2-bend graph $G$ may not be contained in the union of two edge-cliques.  It would be interesting to know if three edge-cliques are always enough.

As mentioned before, the {\sc Clique} problem is polynomially solvable in 1-bend graphs. On the other hand, the problem is shown to be NP-complete in {\em 2-interval} graphs~\cite{FGO}, i.e., graphs admitting an intersection model, in which each vertex is represented by two segments in a real line.
Since every 2-interval graph is a 3-bend graph and also a 2-bend graph with 3 slopes, we know that the problem is NP-complete is these classes as well. The complexity for 2-bend graphs remains open. We conjecture it is NP-hard.

\begin{problem}
What is the complexity of the {\sc Clique} problem in 2-bend graphs?
\end{problem}

On the other hand, every $k$-bend graph is also a $(k+1)$-interval graph \cite{Nemci}).
Bar-Yehuda {\em et al.} \cite{BHNSS} show a $2t$-approximation algorithm for the {\sc Maximum Weighted Independent Set} for $t$-interval graphs, which gives a 4-approximation for 1-bend graphs (this bound was later obtained independently by Epstein {\em et al.} \cite{EGM} for the unweighted problem). It would be interesting to improve the approximation ratio or prove a lower bound  (under some well-established complexity assumption).
\begin{problem}
Is it possible to approximate the {\sc Independent Set} problem in 1-bend graphs with a factor $4-\epsilon$ for some $\epsilon > 0$?
\end{problem}

It is not hard to observe that for any two sets $D,D'$ with $|D|=|D'|=3$, one can transform an $EPG(D)$-representation of any graph $G$ to its $EPG(D')$-representation. However, it is not clear if the same holds for sets with at least 4 direction of slopes. It is worth mentioning that there are infinitely many classes of intersection graphs of segments, each of which is parallel to one of 4 slopes (see \v{C}ern\'y {\em et al.} \cite{Slopes}). 

\begin{problem}
Is the minimum number of bends (per path) in an $EPG(D)$-representation of a graph $G$ always equal to $b_d(G)$, for any set $D$ of $d>3$ slopes?
\end{problem}

Our generalization rises yet further questions. Especially, we may put individual vertices into points with integral coordinates. Now, we may ask, how large grid is necessary and sufficient to represent any graph with $n$ vertices and prescribed number of permitted slopes, or even, with prescribed slopes.
Obviously, if we prescribe too steep (but neither vertical nor horizontal) slopes, the grid-size becomes huge (even for graphs of constant size like, e.g., $K_2$).

\bigskip
\noindent \textbf{Acknowledgements.} The authors are sincerely grateful to Katarzyna \.{Z}uk for stimulating discussions about the topic.


\begin{thebibliography}{99}
\bibitem{AS}{A. Asinowski, A. Suk, Edge intersection graphs of paths on a grid with a bounded number of bends. Disc. Appl. Math 157, 2009, pp.~3174--3180.}
\bibitem{AR}{A. Asinowski, B. Ries, Some propertis of edge intersection graphs of single-bend paths on a grid. Disc. Math 312, 2012, pp.~427--440.}
\bibitem{BHNSS}{R. Bar-Yehuda, M. M. Halldórsson, J. (S.) Naor, H. Shachnai, I. Shapira, Scheduling Split Intervals, SIAM J. Comput. 36. 2006, pp.~1--15.}
\bibitem{BS}{T. Biedl, M. Stern, On edge-intersection graphs of $k$-bend paths in grids. Disc. Math. and Theoret. Comput. Sci. 12:1, 2010, pp.~1--12.}
\bibitem{BL}{K. Booth, G. Lueker, Testing for the consecutive ones property, interval graphs, and planarity using {\em PQ}-tree algorithms. J. Comput. System Sci. 13, 1976, pp.~335--389.}
\bibitem{Slopes}{J. \v{C}ern\'y, D. Kr\'{a}l, H. Nyklov\'{a}, O. Pangr\'ac, On Intersection Graphs of Segments with Prescribed Slopes. GD 2001 Proc., LNCS 2265, 2002, pp.~261--271.}
\bibitem{CCJ}{B. Clark, C. Colbourn, D. Johnson, Unit Disk Graphs. Disc. Math. 86, 1990, pp.~165--177.}
\bibitem{Steve}{K. Cameron, S. Chaplick, C. T. Hoang, Edge intersection graphs of $L$-shaped paths in grids. Disc. Appl. Math 210, 2016, pp.~185--194.}
\bibitem{CJKV}{S. Chaplick, V. Jel\'{i}nek, J. Kratochv\'{i}l, T. Vysko\v{c}il, Bend-Bounded Path Intersection Graphs: Sausages, Noodles, and Waffles on a Grill. WG 2012 Proc., LNCS 7551, 2012, pp.~274--285.}
\bibitem{EGM}{D. Epstein, M. Golumbic, G. Morgenstern, Approximation Algorithms for $B_1$-EPG Graphs, WADS 2013 Proc., LNCS 8037, 2013, pp.~328--340.}
\bibitem{FGO} M. Francis, D. Gon\c{c}alves, P. Ochem, The Maximum Clique Problem in Multiple Interval Graphs. Algoritmica 71, 2015, pp.~812--836.
\bibitem{Izraelci}{M. Golumbic, M. Lipshteyn, M. Stern, Edge intersection graphs of single bend paths on a grid. Networks 54, 2009, pp. 130--138.}
\bibitem{Nemci}{D. Heldt, K. Knauer, T. Ueckerdt, Edge-intersection graphs of grid paths: The bend-number. Disc. Applied Math. 167, 2014, pp. 144--162.}
\bibitem{Nemci2}{D. Heldt, K. Knauer, T. Ueckerdt, On the bend number of planar and outerplanar graphs. Disc. Applied Math. 179, 2014, pp. 109--119.}
\bibitem{Lovasz}{L. Lov\'{a}sz, Coverings and coloring of hypergraphs. 4th SEICCGTC Proc., 1973, pp. 3--12.}
\bibitem{WG}{M. Pergel, P. Rz\k a\. zewski, On edge intersection graphs of paths with 2 bends. WG 2016 Proc., LNCS 9941, 2016, pp. 207--219.}
\end{thebibliography}
\end{document}